\newtheorem{theorem}{Theorem}
\newtheorem{lemma}{Lemma}
\algnewcommand{\NoNumber}[1]{\Statex \hspace*{-\algorithmicindent} #1}
\newtheorem{rrule}{Rule}
\newtheorem{definition}{Definition}
\renewcommand\@formatdoi[1]{\ignorespaces}
\renewcommand{\footnotetextcopyrightpermission}[1]{}
\begin{document}

\title{TetraBFT: Reducing Latency of Unauthenticated, Responsive BFT Consensus}

\author{Qianyu Yu}
\email{qyu100@connect.hkust-gz.edu.cn}
\affiliation{%
  \institution{The Hong Kong University of Science and Technology (Guangzhou)}
  \country{China}}

\author{Giuliano Losa}
\email{giuliano@stellar.org}
\affiliation{%
  \institution{Stellar Development Foundation}
  \country{USA}}

\author{Xuechao Wang}
\email{xuechaowang@hkust-gz.edu.cn}
\affiliation{%
 \institution{The Hong Kong University of Science and Technology (Guangzhou)}
 \country{China}}

\thanks{For correspondence on the paper, please contact Xuechao Wang at xuechaowang@hkust-gz.edu.cn. \newline}

\renewcommand{\shortauthors}{Qianyu Yu, Giuliano Losa and Xuechao Wang}

\begin{abstract}
    This paper presents TetraBFT, a novel unauthenticated Byzantine fault tolerant protocol for solving consensus in partial synchrony, eliminating the need for public key cryptography and ensuring resilience against computationally unbounded adversaries.

    TetraBFT has several compelling features: it necessitates only constant local storage, has optimal communication complexity, satisfies optimistic responsiveness --- allowing the protocol to operate at actual network speeds under ideal conditions --- and can achieve consensus in just 5 message delays, which outperforms all known unauthenticated protocols achieving the other properties listed.
    We validate the correctness of TetraBFT through rigorous security analysis and formal verification.

    Furthermore, we extend TetraBFT into a multi-shot, chained consensus protocol, making a pioneering effort in applying pipelining techniques to unauthenticated protocols. This positions TetraBFT as a practical and deployable solution for blockchain systems aiming for high efficiency.
\end{abstract}

\begin{CCSXML}
<ccs2012>
 <concept>
  <concept_id>00000000.0000000.0000000</concept_id>
  <concept_desc>Do Not Use This Code, Generate the Correct Terms for Your Paper</concept_desc>
  <concept_significance>500</concept_significance>
 </concept>
 <concept>
  <concept_id>00000000.00000000.00000000</concept_id>
  <concept_desc>Do Not Use This Code, Generate the Correct Terms for Your Paper</concept_desc>
  <concept_significance>300</concept_significance>
 </concept>
 <concept>
  <concept_id>00000000.00000000.00000000</concept_id>
  <concept_desc>Do Not Use This Code, Generate the Correct Terms for Your Paper</concept_desc>
  <concept_significance>100</concept_significance>
 </concept>
 <concept>
  <concept_id>00000000.00000000.00000000</concept_id>
  <concept_desc>Do Not Use This Code, Generate the Correct Terms for Your Paper</concept_desc>
  <concept_significance>100</concept_significance>
 </concept>
</ccs2012>
\end{CCSXML}
\ccsdesc[500]{Theory of Computation~Distributed Algorithms}

\keywords{Consensus, Blockchain, BFT}

\maketitle

\section{Introduction}

Byzantine fault tolerant (BFT) consensus protocols are the foundation of permissionless blockchain systems like Bitcoin, Ethereum, and Cosmos.
These three systems are each an example of the three types of permissionless blockchains in the hierarchy of Roughgarden and Lewis-Pye~\cite{lewis-pye_permissionless_2023}: fully permissionless (Bitcoin), dynamically available (Ethereum), and quasi-permissionless~(Cosmos).

While quasi-permissionless systems make the strongest assumptions about their participants --- assuming that their list is known and that a large fraction of them (e.g. two thirds) are available and honest --- those assumptions allow using partially-synchronous BFT consensus protocols in the vein of the seminal PBFT protocol~\cite{castro1999practical}.
This is advantageous, because PBFT-style protocols have been honed by at least 3 decades of research and achieve significantly better performance and lower resource consumption than fully-permissionless protocols and, unlike current dynamically-available protocols, they remain safe during asynchrony.

There are two main categories of PBFT-style protocols: protocols relying on authenticated messages, often called authenticated protocols, and protocols relying only on authenticated channels, often called unauthenticated or information-theoretic protocols.
The key difference is that, with authenticated messages, a node $n_1$ can easily prove to a node $n_2$ that a third node $n_3$ sent a given message ($n_1$ just forwards $n_3$'s authenticated message to $n_2$, which can check the authenticity of the message).
However, with only authenticated channels, there is no straightforward way for $n_2$ to verify a claim by $n_1$ that $n_3$ sent a given message --- it could easily be a lie.

Authenticated messages make it fundamentally easier to devise efficient protocols.
For example, in synchronous systems, with signatures we can solve consensus regardless of the number of failures using the Dolev-Strong algorithm~\cite{dolev_authenticated_1983}; without signatures, we must assume that less that a third of the nodes are faulty~\cite{pease_reaching_1980}.
Recent results of Abraham et al.\cite{abraham_good-case_2021,abraham_good-case_2022} also show that Byzantine reliable broadcast is solvable with a good-case latency of 2 message delays in the authenticated model with $3f+1\leq n$ (where $n$ is the number of nodes and $f$ the number of Byzantine nodes among them), but it takes at least 3 message delays in the unauthenticated model when $3f+1\leq n<4f$.
Perhaps for this reason, authenticated protocols have attracted considerable research attention and have seen wide adoption by quasi-permissionless blockchain systems (Cosmos, Sui, Aptos, Solana, etc.).
Example protocols include Tendermint~\cite{buchman2016tendermint}, Hotstuff~\cite{yin2019hotstuff} and its variants~\cite{jalalzai2023fast,malkhi2023hotstuff}, and Jolteon~\cite{gelashvili2022jolteon}, etc.

In contrast, unauthenticated BFT consensus protocols~\cite{abraham2020information,lokhava2019fast,attiya2023multi} have been much less studied.
However, they offer compelling advantages: %
\begin{enumerate}
    \item Unauthenticated BFT protocols avoid computationally expensive cryptographic operations, like public-key cryptography, and are therefore advantageous in resource-constrained environments.
        This includes protocols which must be executed by smart contracts, e.g.\ trustless cross-chain synchronization protocols like TrustBoost~\cite{sheng2023trustboost} and the Interchain timestamping protocol~\cite{tas2023interchain}.
        In those settings, public-key cryptography would be prohibitively expensive.
    \item Instead of mandating global trust assumptions (such as 1/2 of the mining power being honest in Bitcoin, or 1/2 of the weighted stakers being honest in Ethereum), some blockchains allow participants to make unilateral, heterogeneous trust assumptions.
        For example, one participant might assume 1/2 of a set $A$ of participants is honest, while another participant assumes 2/3 of another set $B\neq A$ is honest.
        Examples include Ripple\cite{amores2020security} and Stellar's Federated Byzantine Agreement model (FBA)~\cite{lokhava2019fast}.
        In this setting, authenticated messages are of limited use because forwarding the set of messages that caused one node $n_1$ to take some action will not necessarily convince another node $n_2$ to take the same action if $n_2$'s trust assumptions are different from $n_1$.
        For example, well-known techniques based on disseminating quorum certificates do not work anymore in this setting.
        This rules out every authenticated protocol that we know of.
        However, with a few tweaks, unauthenticated protocols work in this heterogeneous setting.
    \item Finally, rehashing Abraham and Stern~\cite{abraham2020information}, by avoiding message authentication, unauthenticated protocol implementations have a smaller attack surface and rely on minimal cryptographic assumptions, making them more ``future-proof''.
\end{enumerate}

In this paper, we are interested in unauthenticated, partially-synchronous BFT consensus protocols that have optimal resilience (i.e.\ requiring only $n\geq 3f+1$ nodes), that are optimistically-responsive~\cite{pass_thunderella_2018}, that use only constant local storage space, and that have an optimal communication complexity of $O(n^2)$ bits~\cite{dolevBoundsInformationExchange1985}, where each node sends and receives a linear number of bits.
This combination of features is almost a requirement in practice:
\begin{itemize}
    \item Optimal resilience means we maximize the fault-tolerance budget given a fixed system size.
        This is crucial in the extremely adversarial environments faced by permissionless systems.
    \item Once the network becomes synchronous, optimistically responsive protocols make progress as fast as messages are received; instead, their non-responsive counterparts~\cite{li2023quorum,Non-IT-HS} must wait for fixed-duration timeouts to recover from asynchrony.
        In practice, non-responsiveness can cause large performance hiccups, leading to backlogs of work that are hard to clear out.
    \item In a blockchain setting, where the system is expected to never stop and potentially reach very large scale (e.g. Ethereum currently has hundreds of thousands of validators), we cannot tolerate storage requirements that keep growing forever or a communication complexity higher than linear per node.
\end{itemize}

Before this work, there were only 3 known optimally-resilient, optimistically-responsive, partially-synchronous unauthenticated BFT consensus protocols: two versions of PBFT~\cite{castro_thesis,castro2002practical} (a version requiring unbounded storage and a constant-storage version) and Information-Theoretic HotStuff (IT-HS)~\cite{abraham2020information}.
While the unbounded-storage, unauthenticated PBFT version obviously fails our bounded-storage requirement, its constant-storage counterpart fails the linear-communication requirement: it uses a complex view-change protocol in which each nodes sends a worst-case quadratic number of bits, for a worst-case cubic total complexity.
Only IT-HS achieves all our desiderata (using both constant storage and linear per-node worst-case communication).
However, IT-HS has a whooping good-case latency of 6 message delays.
Roughly speaking, good-case latency is the latency of the protocol when the network is synchronous and the system is not under attack.
It is important to minimize it because this will be the latency in the common case in practice.

It is not known whether IT-HS's good-case latency of 6 message delays can be improved without incurring non-constant space usage (either in messages or local persistent state).
Thus, the first question that we ask in this paper is the following:
\begin{quote}
    \textit{What is the minimum good-case latency achievable in partially-synchronous, optimally-resilient, constant-space, quadratic-communication, optimistically-responsive unauthenticated BFT consensus?}
\end{quote}

We make significant progress towards an answer by presenting TetraBFT, an unauthenticated, partially synchronous BFT consensus algorithm that has optimal resilience ($n\geq 3f+1$), is optimistically-responsive, uses constant-size storage, has worst-case quadrati communication, and has a good-case latency of 5 message delays.
This shows that IT-HS's 6 message delays is not optimal, but we do not know whether 5 is the lower bound.

We provide detailed proofs establishing both the safety and liveness of TetraBFT, but we also formally specify TetraBFT using the TLA+ language~\cite{lamport_specifying_2002} and we mechanically verify its safety property using the Apalache model-checker~\cite{konnov_tla_2019}.
We are able to exhaustively check that TetraBFT is safe in all possible executions with 4 nodes, among which one is Byzantine, 3 different proposed values, and a maximum of 5 views.
This gives us extremely high confidence in the safety of TetraBFT.

While unauthenticated BFT protocols satisfying the desiderata above are a first step towards practical protocols, the next step in practice is to boost throughput using pipelining.
However, the very brief sketch in~\cite{abraham2020information} notwithstanding, there are no known pipelined unauthenticated BFT consensus protocols.
The next question we ask is therefore:
\begin{quote}
    \textit{Can unauthenticated BFT consensus protocols be made practical using pipelining?}
\end{quote}

We answer positively by extending TetraBFT for pipelined multi-shot consensus (also known as state machine replication (SMR)), enabling nodes to efficiently reach consensus on a sequence of blocks and thus construct a blockchain.
Using pipelining, TetraBFT is able to commit one new block every message delay in the good case, and thus, in theory, it achieves a maximal throughput of 5 times the throughput that would be achieved by simply repeating instances of single-shot TetraBFT.
Pipelined TetraBFT is also conceptually simple, using only 2 message types in the good case (proposals and votes), and uses its view-change protocol only to recover from a malicious leader or from asynchrony.
This is a significant improvement over the pipelined version of IT-HS (briefly sketched in Section 3.2 of~\cite{abraham2020information}), which mandates the transmission of \texttt{suggest}/\texttt{proof} messages (used during view change in the single-shot IT-HS) alongside \texttt{vote} messages regardless of the scenario.
To the best of our knowledge, this work is the first to offer a comprehensive description and analysis of pipelining in the unauthenticated setting.

Single-shot TetraBFT, described in~\Cref{sec:single-shot}, follows the classic blueprint for partially-synchronous consensus first proposed by Dwork, Lynch, and Stockmeyer~\cite{dwork1988consensus}.
The protocol executes a sequence of views, where each view has a unique leader that proposes a value, followed by a voting phase after which either a decision is made or nodes timeout and start a view-change protocol.
In the view-change protocol, nodes gather information about what happened in previous views and determine which values are safe --- meaning values that cannot possibly contradict any decision that was or will ever be made in a previous view --- to propose and vote for in the new view.

Our main contributions are summarized below.
\begin{enumerate}
    \item We propose TetraBFT, a novel partial synchronous unauthenticated BFT consensus algorithm with bounded persistent storage, optimistic responsiveness, and reduced latency, improving efficiency by shortening consensus phases.
    \item We conduct a comprehensive security analysis and a formal verification of TetraBFT, affirming the correctness of the protocol and its established security properties.
    \item We extend TetraBFT into a multi-shot, pipelined consensus algorithm, making the first detailed exploration of pipelining within the unauthenticated context. %
\end{enumerate}

\subsection{Protocol Overview}

We consider a message-passing system consisting of $n$ nodes running TetraBFT.
Among the $n$ nodes, $f$ exhibiting Byzantine faults are considered malicious, while the remaining $n-f$ nodes are well-behaved.
We assume $3f < n$.
Well-behaved nodes follow the protocol faithfully while malicious nodes may arbitrarily deviate from the protocol.
We define any group of $n-f$ or more nodes as a \textit{quorum}, and any group of $f+1$ or more nodes as a \textit{blocking set}.

TetraBFT executes a sequence of views, each of which is pre-assigned a unique leader (e.g.\ chosen round-robin).
Each view consists of 7 phases leading to a possible decision: \texttt{suggest}/\texttt{proof}, \texttt{proposal}, \texttt{vote-1}, \texttt{vote-2}, \texttt{vote-3}, \texttt{vote-4} and \texttt{view-change}.
Initially, at view 0, \texttt{suggest}
/\texttt{proof} messages are not required as all the values are determined safe.
In the good case, no \texttt{view-change} message is sent, allowing a decision to be reached in 5 phases.

The protocol is segmented into four key components:

1. \textbf{Proposal.}
A well-behaved leader determines the safety of a value based on the \texttt{suggest} messages from a quorum of nodes before proposing.
Only values determined safe are proposed.
A \texttt{suggest} message is composed of the historical information about \texttt{vote-2} and \texttt{vote-3} messages.

2. \textbf{Voting.}
Mirroring the leader's process, well-behaved nodes determine the safety of a proposed value through a quorum of \texttt{proof} messages, which carry historical information about \texttt{vote-1} and \texttt{vote-4} messages.
A node casts \texttt{vote-1} only for values determined safe.
A node votes through the voting sequence—\texttt{vote-1}, \texttt{vote-2}, \texttt{vote-3}, to \texttt{vote-4}—sequentially, sending each subsequent vote only after receiving a quorum of messages for the preceding vote type.
The name TetraBFT comes from the fact that there are 4 voting phases in the protocol.

3. \textbf{Deciding.}
A node decides a value $val$ upon receiving a quorum of \texttt{vote-4} messages for $val$.

4. \textbf{View change.}
A node sends a \texttt{view-change} message for the next view when its current view does not produce a decision  by a fixed time after the view started (nodes use timers for this).
A node also issues a \texttt{view-change} message for a new view $v$ upon receiving $f+1$ messages from that view, provided it hasn't already sent a message for $v$ or a higher view.
A transition to the new view occurs once $n-f$ \texttt{view-change} messages are received. Nodes keep checking \texttt{view-change} messages throughout all views.

In a nutshell, TetraBFT guarantees safety by ensuring the well-behaved nodes only ever vote for safe values, i.e.\ values that cannot possibly contradict a decision in a previous view.
Liveness is more subtle, and the key is that the leader determines whether a value $val$ is safe at a view when the \texttt{vote-2} messages of a blocking set show that $val$ is safe, and follower nodes determine that $val$ is safe when the \texttt{vote-1} messages of a blocking set show that $val$ is safe.
Thus, any value determined safe by a leader is also determined safe by all other well-behaved nodes.

\subsection{Related Work}

\Cref{table:comparison} compares TetraBFT to other partially-synchronous, unauthenticated BFT protocols in terms of responsiveness (where ``responsive'' in the table means optimistically responsive~\cite{pass_thunderella_2018}), good-case latency, latency with view-change, and size of storage and messages.
Good-case latency means the latency of the protocol in message delays when the network is synchronous from the start and the leader of the first view is well-behaved. Latency with view-change is the latency of a view, in message delays, starting with a view-change.

We justify latencies as follows.
IT-HS (blog version)~\cite{Non-IT-HS} has a latency of 4 phases in the good case: propose, echo, accept, lock, and a suggest message is sent in the view-change scenario.
The Byzantine consensus protocol in Li et al.~\cite{li2023quorum} has a latency of 6 in both cases, attributable to the employment of two instances of Byzantine reliable broadcast, each consisting of three phases.
IT-HS~\cite{abraham2020information} experiences a latency involving 6 phases in the good case: propose, echo, key-1, key-2, key-3, and lock.
Additionally, proof/suggest, request, and abort messages become necessary in the view-change case.
PBFT~\cite{castro_thesis,castro2002practical} demonstrates a latency of 3 phases in the good case: pre-prepare, prepare, commit.
The view-change case necessitates an extra four messages: request, view-change, view-change-ack, and new-view.

\begin{table}[h]
    \caption{Characteristics of the partially-synchronous, unauthenticated BFT consensus protocols known to the authors, as well as two protocols (SCP and the protocol of Li et al.) for heterogeneous-trust systems. ``Responsive'' means optimistically responsive. Latencies are expressed in message delays.}
    \vspace{-3mm}
    \label{table:comparison}
    \begin{tabular}{c c c c c}
        \hline
    & \makecell{Responsiveness} & \makecell{Good-case\\ latency} & \makecell{Latency with\\ view-change } & \makecell{Storage/\\Communicated bits}\\ [0.5ex]
    \hline\hline
        IT-HS (blog version)~\cite{Non-IT-HS} & non-responsive &4& 5&$O(1)$/$O(n^2)$\\
        \hline
        IT-HS~\cite{abraham2020information} & responsive & 6 & 9 & $O(1)$/$O(n^2)$\\
        \hline
        PBFT (bounded)~\cite{castro_thesis} & responsive & 3 & 7 & $O(1)$/$O(n^3)$ \\
        \hline
        PBFT (unbounded)~\cite{castro_thesis,castro2002practical} & responsive & 3 & 7 & unbounded/unbounded\\
        \hline
        SCP~\cite{lokhava2019fast} & not applicable\footnotemark & 6 & 4\footnotemark & $O(1)$/$O(n^2)$\\
        \hline
        Li et al.~\cite{li2023quorum} & non-responsive & 6& 6 & unbounded/unbounded\\
        \hline
        \textbf{TetraBFT} & responsive & 5 & 7 & $O(1)$/$O(n^2)$ \\
        \hline
    \end{tabular}
\end{table}

\addtocounter{footnote}{-1}
\footnotetext{SCP does not guarantee termination unless Byzantine nodes are all eventually evicted.}
\addtocounter{footnote}{1}
\footnotetext{In SCP, views after the first take fewer phases but have weaker liveness guarantees than the first view.}

As we can see in \Cref{table:comparison}, unresponsive protocols have better latency.
However, non-responsiveness is problematic in practice.
Optimistically responsive means that, once the network becomes synchronous with actual delay $\delta$, all well-behaved parties decide in time proportional to $\delta$ (at most $7\delta$ in TetraBFT) instead of proportional to the worst-case latency $\Delta$ in the non-responsive case; this is typically due to the leader having to wait for a fixed duration at the beginning of a new view in order to collect enough information in order to make a proposal that will be accepted by the other nodes.
In practice, one usually makes conservative assumptions about message delay, such that it is likely for $\delta$ to be much smaller than $\Delta$.
In this case, an unresponsive protocols will take significantly longer to change views.
This is a practical problem because, in multi-shot consensus, a long view-change will cause an accumulating backlog of work that may be hard to recover from.

PBFT, IT-HS, and TetraBFT are optimistically responsive.
PBFT achieves better latency, but even its constant-storage version~\cite{castro_thesis} still sends $O(n^3)$ messages in the worst case.
This is not practical in large systems.
For example, Ethereum has hundreds of thousands of (logical) validators as of early 2024.

Like TetraBFT, IT-HS is optimistically responsive, uses constant storage, and has an optimal communication complexity of $O(n^2)$ bits per view.
The main advantage of TetraBFT over IT-HS is that TetraBFT reduces good-case latency by one message-delay.
TetraBFT achieves this feat by ensuring that, upon view change, well-behaved nodes collect enough information to never send a message containing an unsafe value. %
Instead, IT-HS relies on locks for safety, but locks are in some sense imperfect: even though an unsafe value cannot make it to the ``key-1'' phase of IT-HS because a quorum will have a lock for a different value, some well-behaved nodes may still echo unsafe values because they are not locked.
Thus, even if $f+1$ nodes echo the same value $val$, it does not prove that $val$ is safe, and this property is only achieved at the ``key-1'' phase of IT-HS.
In contrast, in TetraBFT, well-behaved nodes never send a message containing an unsafe value.
Thus we do not need the equivalent of IT-HS's echo phase, and that is how TetraBFT achieves 5 phases instead of 6.

To the best of our knowledge, the brief sketch in~\cite{abraham2020information} notwithstanding, this work is the first to offer a comprehensive description and analysis of pipelining in the unauthenticated setting.
Abraham and Stern~\cite{abraham2020information} briefly discuss pipelining IT-HS, but do not give much detail.
Moreover, their suggested protocol seems to mandate the transmission of \texttt{suggest}/\texttt{proof} messages (used during view change in the single-shot IT-HS) alongside \texttt{vote} messages regardless of the scenario, while Multi-shot TetraBFT sends only proposals and votes in the good case.

The Stellar Consensus Protocol (SCP)~\cite{lokhava2019fast} and the protocols of Li et al.~\cite{li2023quorum} are not strictly speaking unauthenticated  protocols.
While they are partially-synchronous BFT consensus protocols, they are presented in two heterogeneous trust models: the federated Byzantine agreement (FBA) model for SCP, and the heterogeneous quorum system model for Li et al.
In these models, nodes are allowed to make their own failure assumptions, resulting in different nodes having different sets of quorums.
Variants of heterogeneous models include asymmetric quorum systems~\cite{cachin_asymmetric_2021}, subjective quorum systems~\cite{garcia-perez_federated_2018}, personal Byzantine quorum systems~\cite{losa_stellar_2019}, and permissionless quorum systems~\cite{cachin_quorum_2023}.
In these models, because nodes may not agree on what is a quorum, authenticated messages and quorum certificates --- as used in one form or another in all partially-synchronous, authenticated BFT consensus protocols known to the authors --- do not work.
Thus, for the main aspects of BFT consensus protocols, those heterogeneous models look unauthenticated, and SCP and the protocol of Li et al.\ indeed are easily transferable to the unauthenticated setting.
An implementation of SCP, called stellar-core, has been in use in the Stellar network since 2015.

An interesting observation is that this also works the other way, from the unauthenticated setting to the heterogeneous setting: TetraBFT could be adapted to work in an heterogeneous setting like the FBA model.
The main difficulty in open heterogeneous settings like FBA is assigning a unique leader to each view.
This is because, without global agreement on a list of participants, we cannot use a round-robin strategy.
Instead, SCP uses a synchronous sub-protocol, called the nomination protocol~\cite{lokhava2019fast}, whose principles could be applied to TetraBFT to obtain simulate a unique leader.

\section{Model and Preliminary Definitions}
We assume the classic setting with $n>3f$ nodes among which $f$ are Byzantines.
Each node has a local timer (used for timeouts) ticking at the same rate and we assume that local computation is instantaneous. 
However, the network is only partially synchronous~\cite{dwork1988consensus}.
This means that the network is initially asynchronous but becomes synchronous after an unknown global stabilization time, noted GST.
Before GST, there is no guarantee of message delivery and messages sent before GST may be permanently lost. %
Note that with constant storage, we must allow the loss of messages during asynchrony, as preventing this would necessitate unbounded buffers.
However, every message sent after GST is guaranteed to be delivered within a known bound $\Delta$.

We are interested in protocols solving first the consensus problem, and then its multi-shot variant total-order broadcast (TOB, also called atomic broadcast).
In the problem of consensus, each node starts with an initial input value, with the goal being for all nodes to agree on a single output value.
A distributed algorithm solving consensus should guarantee the usual properties:

\begin{definition}[Consensus]\label{def:concensus}
\leavevmode
\begin{itemize}
\item \textbf{Termination.} Every well-behaved node eventually decides a value.
\item \textbf{Agreement.} No two well-behaved nodes decide different values.
\item \textbf{Validity.} If all nodes are well-behaved and they all have the same input value $val$, then every well-behaved node that decides a value decides the value $val$.
\end{itemize}
\end{definition}

In contrast to consensus, multi-shot consensus enables nodes to reach consensus on an unlimited series of values $val1, val2, val3,\ldots$
Each value is assigned with a \textit{slot} number, indicating its position in the series.
In the context of a blockchain, these values, essentially data blocks containing \textit{transactions}, are linked sequentially via hash pointers, collectively forming a \textit{chain}.
When a node outputs a block or an entire chain, we say that the block or the chain is \textit{finalized} by the node.
Multi-shot consensus satisfies consistency and liveness, as defined in~\cite{chan2020streamlet}:

\begin{definition}[Multi-shot Consensus]
\leavevmode
\begin{itemize}
\item \textbf{Consistency}. If two chains are finalized by two well-behaved nodes, then one chain must be a prefix of, or equal to, the other.
\item \textbf{Liveness}. If some well-behaved node receives a transaction $\mathsf{txn}$, $\mathsf{txn}$ will eventually be included in all well-behaved nodes’ finalized chains.
\end{itemize}
\end{definition}

\section{Basic TetraBFT}
\label{sec:single-shot}
In this section, we present the Basic TetraBFT protocol that solves the problem of consensus as defined in Definition~\ref{def:concensus}.
We first delve into the message types used in the protocol, detailing the structure and the purpose of each message type. Following this, we present a comprehensive operational description of the protocol, focusing on the evolution of a view. Finally, we provide two helper algorithms designed to assist nodes in efficiently determining safe values.

\subsection{Messages}
We outline the message types that a node can send in Basic TetraBFT. We denote a view by $v$ and a value by $val$. A leader node sends \texttt{proposal} messages and non-leader nodes can send 4 types of \texttt{vote} messages, \texttt{suggest/proof} messages and \texttt{view-change} messages. The \texttt{suggest/proof} messages, incorporating historical records of previously sent  \texttt{vote} messages, facilitate leaders and nodes in determining safe values. Besides, \texttt{view-change} messages enable nodes to transition to a new view. In a more formal structure:

\begin{itemize}
\item
Only sent by the leader:
\begin{itemize}
\item \texttt{proposal} message: formatted as $\langle\texttt{proposal}, v, val\rangle$.
\end{itemize}
\item
Sent by all nodes:
\begin{itemize}
  \item \texttt{vote-i} message: $\langle\texttt{vote-i}, v, val\rangle$, where $\texttt{i = 1,2,3,4}$.
  \item \texttt{suggest} message: $\langle\texttt{suggest}, \langle \texttt{vote-2}, v, val \rangle, \langle \texttt{prev-vote-2}, v, val' \rangle, \langle \texttt{vote-3}, v, val \rangle \rangle$, where
    \begin{itemize}
      \item highest \texttt{vote-2} message $\langle \texttt{vote-2}, v, val \rangle$: the highest \texttt{vote-2} message in the view number that the node has sent;
      \item second-highest \texttt{vote-2} message $\langle \texttt{prev-vote-2}, v, val' \rangle$: the highest \texttt{vote-2} message in the view number that the node has sent for a different value from the highest \texttt{vote-2} message;
      \item highest \texttt{vote-3} message $\langle \texttt{vote-3}, v, val \rangle$: the highest \texttt{vote-3} message in the view number that the node has sent.
    \end{itemize}
  \item \texttt{proof} message: $\langle\texttt{proof}, \langle \texttt{vote-1}, v, val \rangle, \langle \texttt{prev-vote-1}, v,val' \rangle, \langle \texttt{vote-4}, v, val \rangle \rangle$, which follows the same struture as the \texttt{suggest} message, but uses \texttt{vote-4} instead of \texttt{vote-3} and \texttt{vote-1} instead of \texttt{vote-2}.
   \item \texttt{view-change} message: $\langle \texttt{view-change}, v \rangle$.
\end{itemize}
\end{itemize}

Throughout the views, a node needs only to store the highest \texttt{vote-1} and \texttt{vote-2}, \texttt{vote-3} and \texttt{vote-4} messages it sent, along with the second highest \texttt{vote-1} and \texttt{vote-2} messages that carry a different value from their respective highest messages. Thus, similar to IT-HS, TetraBFT requires only a constant amount of persistent storage.

\subsection{Evolution of a View}
\label{sec:evolution-view}
Our protocol operates on a view-based manner, with each view having a unique and pre-determined leader. Every node begins at view 0, equipped with an initial value. In any view, should the leader determine that arbitrary values (including its initial value) are safe in step~\ref{step:2}, it will propose its initial value by default.
A view $v$ proceeds as follows:
\begin{enumerate}
  \item Upon starting the view, each node sets its timer to a timeout of $9\Delta$, ensuring sufficient time for deciding when the leader of current view is well-behaved.
  If $v=0$, the node proceeds directly to step~\ref{step:2}; otherwise, if $v>0$, the node undertakes the following steps:
    \begin{enumerate}
      \item it broadcasts a \texttt{proof} message for the current view and
      \item it sends a \texttt{suggest} message to the leader of the current view.
    \end{enumerate}
  \item \label{step:2} When the leader has determined that a value $val$ is safe to propose
    in the current view according to \Cref{rule:picking-safe-proposal}, it broadcasts a
    \texttt{proposal} message for the current view and for $val$. {A well-behaved leader broadcasts only one \texttt{proposal} message in a view.}
  \item \label{step:3} When a node determines that the leader's proposal is safe in the
    current view according to \Cref{rule:checking-safe-proposal}, it broadcasts a \texttt{vote-1}
    message for the current view and the leader's proposal. {A well-behaved node broadcasts only one \texttt{vote-1} message in a view.}
    \item A node that receives a quorum of \texttt{vote-1} messages for the
      current view and for the same value $val$ sends a \texttt{vote-2}
      message for the current view and for $val$.
    \item A node that receives a quorum of \texttt{vote-2} messages for the
      current view and for the same value $val$ sends a \texttt{vote-3}
      message for the current view and for $val$.
    \item A node that receives a quorum of \texttt{vote-3} messages for the
      current view and for the same value $val$ sends a \texttt{vote-4}
      message for the current view and for $val$.
    \item A node that receives a quorum of \texttt{vote-4} messages for the
      current view and for the same value $val$ decides $val$. 
\end{enumerate}

Upon observing the timer expiration, a node broadcasts a \texttt{view-change} message for the next view. On receiving $f+1$ \texttt{view-change} messages for a view $v'$, a node sends a \texttt{view-change} message for view $v'$ if it has not sent a \texttt{view-change} message for view $v'$ or any higher view.  On receiving $n-f$ \texttt{view-change} messages for a view, a node changes to the view. Nodes keep checking \texttt{view-change} messages regardless of the current view.

We justify the timeout value of $9\Delta$, assuming the network is synchronous with a maximum delay of $\Delta$ (i.e., after GST). Let us consider a well-behaved node that receives $n-f$ \texttt{view-change} messages at time $t$, including at least $f+1$ from well-behaved nodes. Due to the network delay, other well-behaved nodes might receive these $f+1$ \texttt{view-change} messages at time $t+\Delta$ and subsequently echo a \texttt{view-change} message. Hence, all well-behaved nodes should receive $n-f$ \texttt{view-change} messages by time $t+2\Delta$. This indicates a maximum difference of $2\Delta$ in the view change time across well-behaved nodes. An additional $6\Delta$ is necessary for processing \texttt{suggest/proof} messages, a \texttt{proposal} message, and four \texttt{vote} messages. Thus, summing these intervals, we opt for a timeout of $9\Delta$ to slightly overshoot the cumulative $8\Delta$, adding a safety margin.

The following are the rules applied in steps~\ref{step:2} and~\ref{step:3}. \Cref{rule:picking-safe-proposal} applies when a well-behaved leader determines a value safe based on \texttt{suggest} messages from a quorum of nodes. 
If $v\neq 0$, the quorum must meet specific criteria, including the existence of a blocking set within the quorum all claiming in their \texttt{suggest} messages that the value is safe according to \Cref{rule:claims_safe_leader}.
Note that ``claim'' is distinguished from ``determine'' used previously. \Cref{rule:checking-safe-proposal} is invoked when a well-behaved node determines a leader's proposal safe, mirroring \Cref{rule:picking-safe-proposal} but with notable distinctions: it relies on \texttt{proof} messages instead of \texttt{suggest} message, employs \texttt{vote-4} in place of \texttt{vote-3}, and \texttt{vote-1} instead of \texttt{vote-2}. Additionally, \Cref{rule:checking-safe-proposal} \Cref{2biii} incorporates an extra condition involving two blocking sets of nodes, which corresponds to a special scenario where any value can be determined safe. In Section~\ref{sec:proof}, we will proof safety and liveness properties based on these rules. Specifically, we will justify the rationale behind \Cref{rule:picking-safe-proposal} and demonstrate how \Cref{rule:checking-safe-proposal} can be logically derived from \Cref{rule:picking-safe-proposal}.

\begin{rrule}
\label{rule:picking-safe-proposal}
All values are safe in view 0. If $v\neq 0$, a leader determines that the value $val$ is safe to propose in view
  $v$ when the following holds:
  \begin{enumerate}
    \item A quorum $q$ has sent \textup{\texttt{suggest}} messages in view $v$, and
    \item According to what is reported in \textup{\texttt{suggest}} messages, either
      \begin{enumerate}
        \item no member of $q$ sent any \textup{\texttt{vote-3}} before view
          $v$, or
        \label{case:no_vote_leader}
        \item there is a view $v'< v$ such that
          \label{case:highest_vote_leader}
          \begin{enumerate}
            \item \label{case:case 1} no member of $q$ sent any \textup{\texttt{vote-3}} messages for a
              view strictly higher than $v'$, and
            \item \label{case:case 2} any member of $q$ that sent a \textup{\texttt{vote-3}} message in view $v'$ did so with value $val$, and
            \item \label{case:case 3} there is a blocking set $b$ (e.g. $f +1$
              nodes) that all claim in their \textup{\texttt{suggest}} messages that $val$ is safe at $v'$ (see \Cref{rule:claims_safe_leader}).
          \end{enumerate}
      \end{enumerate}
  \end{enumerate}
\end{rrule}

\begin{rrule}
  \label{rule:claims_safe_leader}
We say that a node claims that $val$ is safe in $v'$ in a
\textup{\texttt{suggest}} message when either
\begin{enumerate}
  \item $v'$ is 0,
  \label{view 0}
  \item the node's highest \textup{\texttt{vote-2}} message, as reported in the
    \textup{\texttt{suggest}} message, was sent at view $v''\geq v'$
    and for value $val$, or
  \label{node claims safe in suggest message highest view}
  \item the second highest view for which the node sent a \textup{\texttt{vote-2}} message, as reported in the \textup{\texttt{suggest}} message, is a view $v''\geq
    v'$.
  \label{node claims safe in suggest message second highest view}
\end{enumerate}
\end{rrule}

\begin{rrule}
\label{rule:checking-safe-proposal}

A node that receives a proposal from the leader of the current view determines
that the value $val$ is safe to propose in view
$v$ when:
\begin{enumerate}
  \item A quorum $q$ has sent \textup{\texttt{proof}} messages in view $v$, and
  \item According to what is reported in \textup{\texttt{proof}} messages, either
    \begin{enumerate}
      \item no member of $q$ sent any \textup{\texttt{vote-4}} before view
        $v$, or
        \label{case:no_votes}
      \item there is a view $v'< v$ such that
        \label{case:highest_vote}
        \begin{enumerate}
          \item\label{2bi} no member of $q$ sent any \textup{\texttt{vote-4}} messages for a view strictly higher than $v'$, and
          \item\label{2bii} any member of $q$ that sent a \textup{\texttt{vote-4}} message in
            view $v'$ did so with value $val$, and
          \item\label{2biii} 
          \begin{enumerate}[align=left]
            \item \label{case:one block set} there is a blocking set $b$ (e.g. $f +1$ nodes) that all claim in their \textup{\texttt{proof}} messages that
            $val$ is safe at $v'$ (as described in \Cref{rule:claims_safe}), {or
            \item \label{case:two block sets} there is a value $\widetilde{val}$, a view $\tilde{v}$, and blocking set $\tilde{b}$ that all claim in their \textup{\texttt{proof}} messages that
            $\widetilde{val}$ is safe at $\tilde{v}$ where $v'\le \tilde{v}< v$, and} 
            
            {there is a value $\widetilde{val}'$, a view $\tilde{v}'$, and blocking set $\tilde{b}'$ that all claim in their \textup{\texttt{proof}} messages that
            $\widetilde{val}'\neq \widetilde{val}$ is safe at $\tilde{v}'$ where $\tilde{v} < \tilde{v}'< v$.}
           \end{enumerate}
        \end{enumerate}
    \end{enumerate}
\end{enumerate}
\end{rrule}

\begin{rrule}
  \label{rule:claims_safe}
We say that a node claims that $val$ is safe in $v'$ in a
\textup{\texttt{proof}} message when either
\begin{enumerate}
  \item $v'$ is 0,
    \label{case:r'=0}
  \item the node's highest \textup{\texttt{vote-1}} message, as reported in the
    \textup{\texttt{proof}} message, was sent at view $v''\geq v'$
    and for value $val$, or
  \label{case:highest vote-1}
  \item the second highest view for which the node sent a \textup{\texttt{vote-1}} message, as reported in the \textup{\texttt{proof}} message, is a view $v''\geq v'$.
  \label{case:second highest vote-1}
\end{enumerate}
\end{rrule}

\subsection{Helper Algorithms}
In this section, we introduce two helper algorithms that assist nodes in efficiently determining safe values according to \Cref{rule:picking-safe-proposal} and \Cref{rule:checking-safe-proposal} respectively. In \Cref{sec:Liveness Argument}, we will prove the existence of a safe proposal according to \Cref{rule:picking-safe-proposal}: upon receiving \texttt{suggest} messages from a quorum that includes all well-behaved nodes, a well-behaved leader is able to determine a value to be safe (see \Cref{claim:safety existence}). However, it is important to note that a leader does not know whether the \texttt{suggest} messages it receives are from well-behaved nodes or not. Therefore, the leader should carefully select a quorum from at most $n$ \texttt{suggest} messages to determine a safe value. This process necessitates an efficient algorithm, which we present as \Cref{alg:leader finds v} for \Cref{rule:picking-safe-proposal}. Similarly, for \Cref{rule:checking-safe-proposal}, we introduce \Cref{alg:node determine value}, which allows nodes to efficiently identify a quorum from at most $n$ \texttt{proof} messages to determine the safety of a leader's proposal. \Cref{alg:claim2/4} illustrates a function utilized in both \Cref{alg:leader finds v} and \Cref{alg:node determine value} describing the contents of \Cref{rule:claims_safe_leader} and \Cref{rule:claims_safe}. \Cref{rule:claims_safe_leader} and \Cref{rule:claims_safe} can be found in Appendix~\ref{sec:helper algo}.

\begin{algorithm} 
\caption{node\_claim\_safe($suggest/proof, v', val$)}
\begin{algorithmic}[1]\label{alg:claim2/4}
    \State \textbf{if} $suggest$ \textbf{then} $vote \gets suggest.vote\mathit{2}, prev\_vote \gets suggest.prev\_vote\mathit{2}$
    \State \textbf{if} $proof$ \textbf{then} $vote \gets proof.vote\mathit{1}, prev\_vote \gets proof.prev\_vote\mathit{1}$
    \If{$v' = 0$} 
        \State \Return $true$
    \ElsIf{$vote.view \ge v'$ \textbf{and} $vote.val = val$}
        \State \Return $true$
    \ElsIf{$prev\_vote.view \ge v'$}
        \State \Return $true$
    \Else 
        \State \Return $false$
    \EndIf
\end{algorithmic}
\end{algorithm}

The key point of \Cref{alg:leader finds v} lies in tracing back from view $v-1$ to find the view $v'$ as specified in \Cref{rule:picking-safe-proposal}, where $v$ denotes the current view. A notable optimization to reduce the algorithm's complexity lies in skipping a view if it lacks sufficient \texttt{vote-2} and \texttt{prev-vote-2} (see line~\ref{algo:2_19} in \Cref{alg:leader finds v}), as \Cref{rule:picking-safe-proposal} \Cref{case:case 3} is not satisfied. The overall computational complexity of \Cref{alg:leader finds v} is $O(v\times m\times n)$, where $m = O(n)$ is the number of possible values derived from \texttt{vote-3} and \texttt{prev-vote-2} messages, and $n$ is the total number of nodes.

The logic in lines~\ref{algo:3_11}-\ref{algo:3_19}  of \Cref{alg:node determine value} is similar to \Cref{alg:leader finds v}, but is simplified as the value is given. For verifying \Cref{rule:checking-safe-proposal} \Cref{case:two block sets}, the goal is to identify two blocking sets that claim two different values $\widetilde{val}$ and $\widetilde{val}'$ safe, at views $\tilde{v}$ and $\tilde{v}'$, satisfying $v'\le \tilde{v}<\tilde{v}'<v$. Although it involves identifying three specific views, $v'$, $\tilde{v}$ and $\tilde{v}'$, the process only backtracks from view $v-1$ once, which is crucial for complexity reduction. We first verify if the number of nodes claiming a value safe (using \Cref{alg:claim2/4}) meets the size of a blocking set. If so, we record the ($view$, $val$) pair, representing the possible blocking sets. We then verify every possible combination of these pairs to ensure that the two blocking sets are within the same quorum, thereby satisfying \Cref{rule:checking-safe-proposal} \Cref{{2bii}}. Upon determining $\tilde{v}$ and $\tilde{v}'$, it is critical to note that for identifying $v'$, verifying whether $v'=\tilde{v}$ meets \Cref{rule:checking-safe-proposal} \Cref{2bi} and \Cref{2bii} suffices. This is because if no quorum member sent any \texttt{vote-4} messages for a view strictly higher than $v'$ (\Cref{rule:checking-safe-proposal} \Cref{2bi}), this condition remains valid for smaller value of $v'$, thus also reducing the computational complexity. The algorithm maintains an overall complexity of $O(v\times m \times n)$, the same as \Cref{alg:leader finds v}.

\section{Security Analysis}
\label{sec:proof}
In this section, we prove that Basic TetraBFT solves the problem of consensus as defined in \Cref{def:concensus}. We first prove some useful lemmas on liveness and safety. The proof for liveness is the most interesting, and it follows the logic illustrated in Fig.~\ref{fig:lemma}. Our first objective is to prove the existence of a safe value, showing that a leader is capable of determining a value as safe upon receiving \texttt{suggest} messages from a quorum containing all well-behaved nodes. We then proceed to demonstrate that once a well-behaved node has received \texttt{proof} messages from each of the other well-behaved nodes, it determines the leader’s proposal safe. Finally, we show that when all well-behaved nodes determine the leader’s proposal safe, a decision is consequently made. 

The proof of safety is divided into two main components: within-view safety and cross-view safety. We first prove that well-behaved nodes cannot make conflicting decisions within the same view. Regarding cross-view safety, we show that after a value is decided by any well-behaved node, all well-behaved nodes will not send $\texttt{vote-1}$ messages for a different value in future views, thus preventing any possibility of conflicting decisions. 

Building on liveness and safety arguments, we finally present the main theorem. TetraBFT guarantees agreement, validity and termination in partial synchrony. 

\begin{figure}[t]
\centering
\includegraphics[height=2cm]{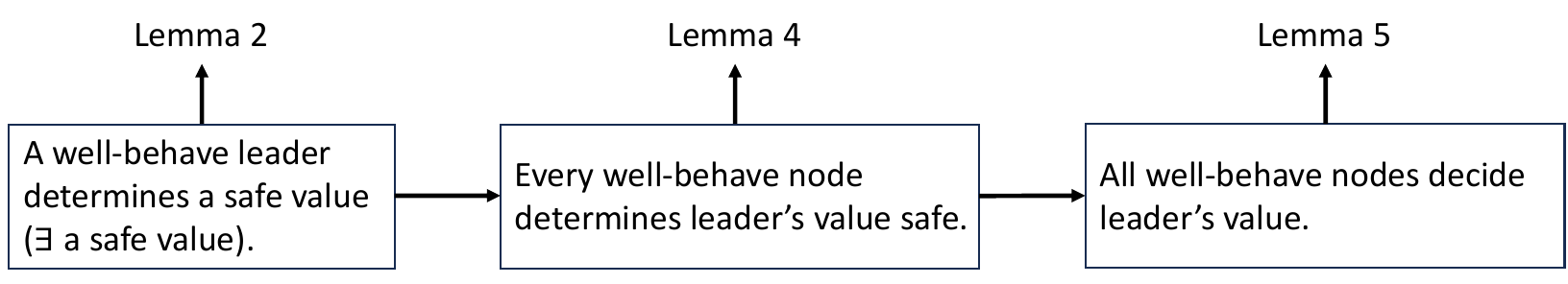}
\caption{Liveness lemmas logical framework.}
\label{fig:lemma}
\end{figure}

\subsection{Liveness Argument}\label{sec:Liveness Argument}

The argument concerning liveness is predicated on the assumption that view $v$ is led by a well-behaved leader and the network is synchronous (i.e.\ the view starts after GST). The first lemma establishes that when a well-behaved node sends a \texttt{vote-1} or a \texttt{vote-2} message for a value $val$, it means that the node claims $val$ is safe in all preceding views, and this assertion is maintained in all subsequent views. This lemma illustrates a property that the set of safe values claimed by a well-behaved node shrinks as views advance, yet the claim holds in later views. We prove the first statement of the lemma by considering two cases: whether the value of the highest \texttt{vote-1} message in later views matches $val$ or not. By mapping these two cases to \Cref{rule:claims_safe} \Cref{case:highest vote-1} and \Cref{rule:claims_safe} \Cref{case:second highest vote-1} respectively, the statement is proved. Similarly, by altering \texttt{vote-1} message to \texttt{vote-2} message and \texttt{proof} message to \texttt{suggest} message and using \Cref{rule:claims_safe_leader} \Cref{node claims safe in suggest message highest view} and \Cref{rule:claims_safe_leader} \Cref{node claims safe in suggest message second highest view}, the second statement is also validated.

\begin{lemma}\label{lemma: honest_vote_1_safe_in_proof}
  If a well-behaved node sends a \textup{\texttt{vote-1}} message for value $val$ in view $v$, it will claim $val$ is safe at any view $\le v$ in its \textup{\texttt{proof}} message in views greater than $v$. Similarly, if a well-behaved node sends a \textup{\texttt{vote-2}} message for value $val$ in view $v$, it will claim $val$ is safe at any view $\le v$ in its \textup{\texttt{suggest}} message in views greater than $v$.
\end{lemma}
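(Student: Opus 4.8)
The plan is to fix a well-behaved node that sends a \texttt{vote-1} message for $val$ in some view $v$, then fix an arbitrary view $w > v$ in which this node broadcasts a \texttt{proof} message, and prove that this \texttt{proof} message claims $val$ safe at every view $u \le v$ in the sense of \Cref{rule:claims_safe}. The case $u = 0$ is immediate, since the first clause of \Cref{rule:claims_safe} makes every value safe at view $0$; so I would focus on $1 \le u \le v$. Following the hint in the surrounding text, I would case-split on whether the value of the highest \texttt{vote-1} reported in the \texttt{proof} message equals $val$.

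Write the highest \texttt{vote-1} reported in the \texttt{proof} message as $\langle \texttt{vote-1}, V_1, V_{val}\rangle$, and note that, because the node already cast a \texttt{vote-1} for $val$ in view $v < w$, we have $V_1 \ge v$. If $V_{val} = val$, then the highest \texttt{vote-1} was cast at view $V_1 \ge v \ge u$ with value $val$, so the second clause of \Cref{rule:claims_safe} fires and the node claims $val$ safe at $u$. If instead $V_{val} \ne val$, I would invoke that a well-behaved node casts at most one \texttt{vote-1} per view and that it cast one for $val$ in view $v$; hence the highest \texttt{vote-1}, carrying a different value, cannot sit in view $v$ and must satisfy $V_1 > v$. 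The view-$v$ \texttt{vote-1} for $val$ is then a \texttt{vote-1} whose value differs from $V_{val}$, so the second-highest such view recorded in the \texttt{prev-vote-1} field, call it $V_2$, obeys $V_2 \ge v \ge u$; the third clause of \Cref{rule:claims_safe} then fires and the node again claims $val$ safe at $u$.

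The second statement is proved by the same argument after the obvious substitutions: replace \texttt{vote-1} by \texttt{vote-2}, \texttt{proof} by \texttt{suggest}, and \Cref{rule:claims_safe} by \Cref{rule:claims_safe_leader}, so that its second and third clauses play the roles above. I expect the main obstacle to be the case $V_{val} \ne val$: one must argue that the \texttt{prev-vote-1} field, defined as the highest \texttt{vote-1} for a value different from that of the highest \texttt{vote-1}, necessarily reaches back at least to view $v$. This relies crucially on the single-\texttt{vote-1}-per-view discipline of well-behaved nodes, which forces $V_1 > v$ when $V_{val} \ne val$ and thereby guarantees that the view-$v$ vote for $val$ is captured in (or dominated by) the second-highest slot. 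Everything else is a direct unfolding of the message structure and \Cref{rule:claims_safe}.
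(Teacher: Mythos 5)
Your proof is correct and follows essentially the same route as the paper's: a case split on whether the highest \texttt{vote-1} reported in the \texttt{proof} message carries $val$, mapping the two cases to \Cref{rule:claims_safe} \Cref{case:highest vote-1} and \Cref{case:second highest vote-1} respectively, then transferring the argument to \texttt{vote-2}/\texttt{suggest} via \Cref{rule:claims_safe_leader}. Your treatment is in fact slightly more careful than the paper's, since you explicitly justify (via the one-\texttt{vote-1}-per-view discipline) why the \texttt{prev-vote-1} view must reach back to $v$ in the second case, a step the paper asserts without elaboration.
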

\begin{proof}
  We prove the lemma for the case of the \texttt{proof} message and the same logic can be applied to the \texttt{suggest} message. Our aim is to demonstrate that if a well-behaved node, denoted as $i$, sends a \texttt{vote-1} message for value $val$ in view $v$, then in its \texttt{proof} message in views greater than $v$, it will claim $val$ is safe at any view $v'$ with $v'\le v$. Suppose $i$ sends a \texttt{vote-1} message for value $val$ in view $v$. In $i$'s \texttt{proof} message in a view greater than $v$, the highest \texttt{vote-1} message was sent in a view $v''$ with $v'' \geq v \geq v'$ for some value $val'$. If $val' = val$, then the highest \texttt{vote-1} message was sent at a view $v''\ge v'$ and for value $val$, satisfying \Cref{rule:claims_safe} \Cref{case:highest vote-1}. If $val' \neq val$, then the second highest view for which $i$ sent a \textup{\texttt{vote-1}} message (for a different value from the highest \texttt{vote-1}) was at a view $\geq v'$, satisfying \Cref{rule:claims_safe} \Cref{case:second highest vote-1}. Combining these two cases, we have that node $i$ claims in its \texttt{proof} message in views greater than $v$ that $val$ is safe at view $v'$. Similarly, by altering \texttt{vote-1} message to \texttt{vote-2} message and \texttt{proof} message to \texttt{suggest} message, using \Cref{rule:claims_safe_leader} \Cref{node claims safe in suggest message highest view} and \Cref{rule:claims_safe_leader} \Cref{node claims safe in suggest message second highest view}, the same line of reasoning substantiates the second statement. Thus, the lemma is proven.
\end{proof}

The following lemma proves that a leader is able to determine a safe value once it receives \texttt{suggest} messages from a quorum containing all well-behaved nodes, while \Cref{alg:leader finds v} gives an efficient solution to find the safe value. We validate this lemma by analyzing \Cref{rule:picking-safe-proposal}. The case corresponding to \Cref{rule:picking-safe-proposal} \Cref{case:no_vote_leader} is straightforward. Regarding \Cref{rule:picking-safe-proposal} \Cref{case:highest_vote_leader}, we consider the last view $v'$ where some node in the quorum sent \texttt{vote-3} messages, as specified in \Cref{rule:picking-safe-proposal} \Cref{case:case 1}. Building on this, the fulfillment of \Cref{rule:picking-safe-proposal} \Cref{case:case 2} can be easily inferred by the quorum-intersection property (i.e., the intersection of two quorums must contain at least one well-behaved node), and \Cref{rule:picking-safe-proposal} \Cref{case:case 3} is logically supported by \Cref{lemma: honest_vote_1_safe_in_proof}.
\begin{lemma}
  Upon receiving \textup{\texttt{suggest}} messages from a quorum containing all well-behaved nodes, a leader determines some value $val$ that is safe.
  \label{claim:safety existence}
\end{lemma}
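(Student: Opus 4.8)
The plan is to prove existence constructively, by exhibiting a value together with a witnessing quorum that satisfies \Cref{rule:picking-safe-proposal}. I would take the quorum $q$ of \Cref{rule:picking-safe-proposal} to be exactly the set of all $n-f$ well-behaved nodes; this is a quorum, and crucially every \texttt{suggest} message I reason about is then truthful, so what a node ``reports'' coincides with what it actually did. I then split on the structure of \Cref{rule:picking-safe-proposal}. If no well-behaved node sent any \texttt{vote-3} before view $v$, then \Cref{case:no_vote_leader} holds outright and every value (in particular the leader's initial value) is safe; this is the easy branch.

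In the remaining branch I would set $v'$ to be the highest view below $v$ at which some well-behaved node sent a \texttt{vote-3}, and let $val$ be the value of such a \texttt{vote-3}, aiming to satisfy \Cref{case:highest_vote_leader}. Sub-condition \Cref{case:case 1} is immediate from the maximality of $v'$. For \Cref{case:case 2} I would invoke quorum intersection to show that a single view admits at most one \texttt{vote-3}-ed value: a \texttt{vote-3} for a value is backed by a quorum of \texttt{vote-2} for it, hence ultimately by a quorum of \texttt{vote-1}; two quorums overlap in at least $n-2f\ge f+1$ nodes and so share a well-behaved node, but a well-behaved node casts only one \texttt{vote-1} per view, so the two values must coincide. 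Consequently every member of $q$ that sent a \texttt{vote-3} at $v'$ reports the same value $val$.

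The crux is \Cref{case:case 3}, the existence of a blocking set claiming $val$ safe at $v'$, and this is where I expect the real work. Here I would trace one step further back from the \texttt{vote-3}: the well-behaved node that sent \texttt{vote-3} for $val$ at $v'$ did so only after collecting a quorum of \texttt{vote-2} messages for $val$ at $v'$, and that quorum contains at least $n-2f\ge f+1$ well-behaved nodes, all lying in $q$. Each of them sent a \texttt{vote-2} for $val$ at $v'$, so by \Cref{lemma: honest_vote_1_safe_in_proof} each claims, in its view-$v$ \texttt{suggest} message, that $val$ is safe at every view $\le v'$ and in particular at $v'$ (in the sense of \Cref{rule:claims_safe_leader}). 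These $\ge f+1$ nodes constitute the required blocking set, which completes \Cref{case:highest_vote_leader} and hence the lemma. The main obstacle is precisely this final linkage: one has to descend from the single \texttt{vote-3} to a full quorum of backing \texttt{vote-2}'s, extract its $\ge f+1$ well-behaved members via the intersection bound $n-2f\ge f+1$, and then apply \Cref{lemma: honest_vote_1_safe_in_proof} to turn each of their past \texttt{vote-2}'s into a present safety claim.
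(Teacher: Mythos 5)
Your proof is correct and follows essentially the same route as the paper's: the same case split on \Cref{rule:picking-safe-proposal}, the same choice of $v'$ as the highest view with a well-behaved \texttt{vote-3}, and the same descent from that \texttt{vote-3} to its backing quorum of \texttt{vote-2} messages to extract a well-behaved blocking set via \Cref{lemma: honest_vote_1_safe_in_proof}. Your only deviations are cosmetic but welcome tightenings: you fix $q$ to be exactly the set of well-behaved nodes so that all reports are truthful (a point the paper leaves implicit), and you justify the uniqueness of the \texttt{vote-3} value at $v'$ by descending to \texttt{vote-1} quorums rather than the paper's blocking-set argument at the \texttt{vote-2} level.
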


\begin{proof}
  Suppose in a view $v\neq 0$, a well-behaved leader received \texttt{suggest} messages from a quorum $q$ containing all well-behaved nodes. If no member of $q$ sent any \texttt{vote-3} messages before view $v$, then any value proposed by the leader is safe according to \Cref{rule:picking-safe-proposal} \Cref{case:no_vote_leader}. Now consider there is a view $v'<v$ which is the last view in which some nodes in $q$ sent \texttt{vote-3} messages. Suppose a member of quorum $q$, node $i$, has sent a \texttt{vote-3} message for value $val$ in view $v'$. In this case, node $i$ must have received \texttt{vote-2} messages for value $val$ in view $v'$ from a quorum of nodes, within which a blocking set $b$ of nodes are well-behaved. According to \Cref{lemma: honest_vote_1_safe_in_proof}, we have that all nodes in the blocking set $b$ claim in their \texttt{suggest} messages that $val$ is safe at $v'$. In addition, since no member of $b$ sent \texttt{vote-2} message for any value $val' \neq val$ in view $v'$, then there was no quorum of nodes that sent \texttt{vote-2} message for $val'$ in $v'$. Thus, no member in $q$ sent a \texttt{vote-3} message for any value other than $val$ in $v'$. Then, by \Cref{rule:picking-safe-proposal} \Cref{case:highest_vote_leader}, the claim is proved. 
\end{proof}

The subsequent lemma outlines the implications arising when a well-behaved node claims a value safe in its \texttt{suggest} message. Specifically, it shows how \Cref{rule:checking-safe-proposal} derives from each item in \Cref{rule:claims_safe_leader}. The case for \Cref{rule:claims_safe_leader} \Cref{view 0} is straightforward. By applying \Cref{lemma: honest_vote_1_safe_in_proof}, we show that \Cref{rule:checking-safe-proposal} \Cref{2bii} and \Cref{rule:checking-safe-proposal} \Cref{2biii} directly relate to \Cref{rule:claims_safe_leader} \Cref{node claims safe in suggest message highest view} and \Cref{rule:claims_safe_leader} \Cref{node claims safe in suggest message second highest view}, respectively. 
\begin{lemma}
  \label{claim:blocking_claims_safe}
  If a well-behaved node claims that $val$ is safe at $v'$ in its \textup{\texttt{suggest}} message in view $v > v'$,  
  \begin{enumerate}
      \item then there is a blocking set $b$ composed entirely of well-behaved nodes such that, in any \textup{\texttt{proof}} message in views greater than or equal to $v$, every member of $b$ claims that $val$ is safe at $v'$,
      \item or there are two blocking sets $\tilde{b}$ and $\tilde{b}'$ both composed entirely of well-behaved nodes such that, in any \textup{\texttt{proof}} message in views greater than or equal to $v$, every member of $\tilde{b}$ claims that $\widetilde{val}$ is safe at $\tilde{v}$ where $v'\le \tilde{v} < v$, and every member of $\tilde{b}'$ claims that $\widetilde{val}'\neq \widetilde{val}$ is safe at $\tilde{v}'$ where $\tilde{v}<\tilde{v}'<v$.
  \end{enumerate} 
\end{lemma}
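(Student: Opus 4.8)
The plan is to case-split on \Cref{rule:claims_safe_leader}, the definition of what it means for the well-behaved node (call it $i$) to claim that $val$ is safe at $v'$ in the \texttt{suggest} message it sends in view $v$. There are exactly three reasons this claim can hold, and I would map the first two to conclusion~1 and the third to conclusion~2. Two facts will be used repeatedly: (a) every quorum contains at least $n-2f\ge f+1$ well-behaved nodes, so the well-behaved members of any quorum form a blocking set; and (b) since $i$ sends its \texttt{suggest} message at the very start of view $v$ (before it votes in view $v$), any \texttt{vote-2} message it reports was necessarily sent in a view strictly below $v$.

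First I would dispatch the base case \Cref{view 0}, where $v'=0$: I simply pick any blocking set $b$ of $f+1$ well-behaved nodes (which exists since $n-f\ge f+1$), and by \Cref{rule:claims_safe} \Cref{case:r'=0} every well-behaved node unconditionally claims any value safe at view $0$ in every \texttt{proof} message, so conclusion~1 holds. Next, in case \Cref{node claims safe in suggest message highest view}, $i$'s highest \texttt{vote-2} was sent for $val$ at some view $v''$ with $v'\le v''<v$ (using fact~(b)). Because $i$ sent that \texttt{vote-2} only after collecting a quorum of \texttt{vote-1} messages for $val$ in $v''$, fact~(a) yields a blocking set $b$ of well-behaved nodes that each sent \texttt{vote-1} for $val$ in $v''$; \Cref{lemma: honest_vote_1_safe_in_proof} then guarantees that each such node claims $val$ safe at every view $\le v''$ --- in particular at $v'$ --- in all its \texttt{proof} messages in views $>v''$, hence in all views $\ge v$. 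This gives conclusion~1.

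The heart of the argument is case \Cref{node claims safe in suggest message second highest view}, which produces conclusion~2. Here $i$'s second-highest \texttt{vote-2} view is some $v''\ge v'$. I would first establish that a well-behaved node sends \texttt{vote-2} for at most one value per view: two distinct values cannot each amass a quorum of \texttt{vote-1} in one view, by quorum intersection together with the fact that a well-behaved node sends only one \texttt{vote-1} per view. Consequently $i$'s highest \texttt{vote-2}, say for $\widetilde{val}'$ at view $\tilde{v}'$, and its second-highest \texttt{vote-2}, for some $\widetilde{val}\neq\widetilde{val}'$ at view $\tilde{v}:=v''$, are at genuinely distinct views with $v'\le\tilde{v}<\tilde{v}'<v$ (the strict bounds again from fact~(b)). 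Applying fact~(a) and \Cref{lemma: honest_vote_1_safe_in_proof} separately to the \texttt{vote-1} quorum behind each of these two \texttt{vote-2} messages then yields two blocking sets $\tilde{b}$ and $\tilde{b}'$ of well-behaved nodes whose members, in every \texttt{proof} message in views $\ge v$, respectively claim $\widetilde{val}$ safe at $\tilde{v}$ and $\widetilde{val}'$ safe at $\tilde{v}'$, exactly as required.

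I expect the main obstacle to be the bookkeeping of view inequalities in the third case, and in particular justifying that the highest and second-highest \texttt{vote-2} views are truly distinct with $\tilde{v}<\tilde{v}'$. This rests on the ``at most one \texttt{vote-2} value per view'' observation, which itself relies on quorum intersection, so I would state it cleanly before invoking it. I would also double-check that the view of each relevant \texttt{proof} message (which is $\ge v$) strictly exceeds both $\tilde{v}$ and $\tilde{v}'$, so that the hypothesis ``views greater than $v''$'' of \Cref{lemma: honest_vote_1_safe_in_proof} is met in every subcase.
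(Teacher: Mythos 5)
Your proof is correct and follows essentially the same route as the paper's: a case split on \Cref{rule:claims_safe_leader}, mapping its first two cases to conclusion~1 and the third to conclusion~2, using the fact that every quorum of \texttt{vote-1} senders contains a well-behaved blocking set together with \Cref{lemma: honest_vote_1_safe_in_proof}. Your explicit justification of the strict inequality $\tilde{v}<\tilde{v}'$ (via the at-most-one-\texttt{vote-2}-value-per-view observation) is a small point the paper asserts without proof, but otherwise the two arguments coincide.
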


\begin{proof}
  Suppose a well-behaved node $i$ claims value $val$ is safe at view $v'$ in its \texttt{suggest} message in view $v$. By \Cref{rule:claims_safe_leader}, there are three cases and we will discuss each one separately. 
  
  First (\Cref{rule:claims_safe_leader} \Cref{view 0}), if $v'=0$, then \Cref{rule:claims_safe} \Cref{case:r'=0} will trivially be true for every well-behaved node. 
  
  Second (\Cref{rule:claims_safe_leader} \Cref{node claims safe in suggest message highest view}), if node $i$'s highest \texttt{vote-2} message was sent at view $v''\ge v'$ for value $val$, then it has received a quorum of \texttt{vote-1} messages for value $val$ in view $v''$. Since $3f<n$, it can be inferred that there is a well-behaved blocking set $b$ that has sent \texttt{vote-1} messages in view $v''$ for value $val$.  According to \Cref{lemma: honest_vote_1_safe_in_proof}, it follows that every node in the blocking set $b$ claims in its \texttt{proof} message in views greater than or equal to $v$ that $val$ is safe at view $v'$, which proves the claim by satisfying the first consequent.
  
  Third (\Cref{rule:claims_safe_leader} \Cref{node claims safe in suggest message second highest view}), node $i$'s second highest \texttt{vote-2} message was sent at view $\tilde{v}$ with $v'\le \tilde{v} < v$ for value $\widetilde{val}$, and its highest \texttt{vote-2} message was sent at a view denoted as $\tilde{v}'$, with $\tilde{v}<\tilde{v}'<v$ and for value $\widetilde{val}'\neq \widetilde{val}$. From $i$'s second highest \texttt{vote-2} message, it can be deduced that in view $\tilde{v}$ a quorum of nodes has sent \texttt{vote-1} messages for value $\widetilde{val}$, where a blocking set $\tilde{b}$ of nodes are well-behaved. According to \Cref{lemma: honest_vote_1_safe_in_proof}, it follows that in any \textup{\texttt{proof}} message in views greater than or equal to $v$, every member of $\tilde{b}$ claims that $\widetilde{val}$ is safe at $\tilde{v}$. Similarly, from $i$'s highest \texttt{vote-2} message, in view $\tilde{v}'$ a quorum of nodes sent \texttt{vote-1} messages for value $\widetilde{val}'$, where a blocking set $\tilde{b}'$ of nodes are well-behaved. As per \Cref{lemma: honest_vote_1_safe_in_proof}, it holds that in any \textup{\texttt{proof}} message in views greater than or equal to $v$, every member of $\tilde{b}'$ claims that $\widetilde{val}'$ is safe at $\tilde{v}'$. Thus, the second consequent of \Cref{claim:blocking_claims_safe} is satisfied and we can conclude the proof.
\end{proof}

The following lemma shows that a well-behaved node will determine a well-behaved leader's proposal as safe. To prove this lemma, we demonstrate alignment between the cases outlined in \Cref{rule:picking-safe-proposal} and those in \Cref{rule:checking-safe-proposal}. The cases corresponding to $v=0$ and \Cref{rule:picking-safe-proposal} \Cref{case:no_vote_leader} are straightforward. As for the case corresponding to \Cref{rule:picking-safe-proposal} \Cref{case:highest_vote_leader}, we establish a one-to-one mapping from \Cref{rule:picking-safe-proposal} \Cref{case:case 1}, \Cref{case:case 2}, \Cref{case:case 3} to \Cref{rule:checking-safe-proposal} \Cref{2bi}, \Cref{2bii}, \Cref{2biii} by quorum-intersection and \Cref{{claim:blocking_claims_safe}}. 

\begin{lemma}
  \label{claim:proposal_determined_safe}
  A well-behaved node eventually determines that the leader's value $val$ is safe.
\end{lemma}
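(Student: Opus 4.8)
The plan is to transfer, case by case, the conditions that the well-behaved leader verified under \Cref{rule:picking-safe-proposal} on \texttt{suggest} messages to the conditions a well-behaved node must verify under \Cref{rule:checking-safe-proposal} on \texttt{proof} messages. Since the leader is well-behaved, it only proposed $val$ after determining $val$ safe via \Cref{rule:picking-safe-proposal} using \texttt{suggest} messages from some quorum $q$. After GST, the well-behaved node in question eventually receives the \texttt{proof} messages broadcast at the start of view $v$ from a quorum $q'$ that includes all well-behaved nodes; I would argue that $q'$ witnesses the safety of $val$ under \Cref{rule:checking-safe-proposal}. The single fact I would use throughout is that a \texttt{vote-4} in a view $w$ is only sent after a quorum of \texttt{vote-3} messages in $w$, and by quorum intersection this quorum shares a well-behaved node with $q$.

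First the easy cases: if $v=0$ all values are safe, and if the leader used \Cref{rule:picking-safe-proposal} \Cref{case:no_vote_leader} (no member of $q$ sent \texttt{vote-3} before $v$), I would derive \Cref{rule:checking-safe-proposal} \Cref{case:no_votes} for $q'$. Indeed, any \texttt{vote-4} before $v$ by a member of $q'$ would force a quorum of \texttt{vote-3} messages before $v$, hence a well-behaved member of $q$ sending \texttt{vote-3} before $v$, a contradiction.

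For the main case, \Cref{rule:picking-safe-proposal} \Cref{case:highest_vote_leader} with witnessing view $v'$, I would map its three sub-conditions onto those of \Cref{rule:checking-safe-proposal}. The maps \Cref{case:case 1}~$\to$~\Cref{2bi} and \Cref{case:case 2}~$\to$~\Cref{2bii} both follow from the \texttt{vote-4}-implies-quorum-of-\texttt{vote-3} fact plus intersection with $q$: a \texttt{vote-4} by a member of $q'$ in a view $>v'$, or in view $v'$ for a value $\neq val$, would force a well-behaved member of $q$ to violate \Cref{case:case 1} or \Cref{case:case 2} respectively. The delicate step, and where I expect the real work to be, is \Cref{case:case 3}~$\to$~\Cref{2biii}. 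Here I would take a well-behaved node in the blocking set $b$ that claims $val$ safe at $v'$ in its \texttt{suggest} message and invoke \Cref{claim:blocking_claims_safe}. That lemma yields either a single well-behaved blocking set claiming $val$ safe at $v'$ in \texttt{proof} messages --- exactly \Cref{case:one block set} --- or two well-behaved blocking sets claiming two distinct values safe at views sandwiched in $[v',v)$ --- exactly \Cref{case:two block sets}. Since these blocking sets consist only of well-behaved nodes, they lie inside $q'$, so the condition is genuinely witnessed by the \texttt{proof} messages the node received.

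The hard part is seeing why the two-blocking-set disjunct \Cref{case:two block sets} is needed at all: a node's claim of safety in a \texttt{suggest} message rests on \texttt{vote-2} information, but translating it into \texttt{proof}-message evidence only offers \texttt{vote-1} information, and a single well-behaved node's \emph{second-highest} \texttt{vote-2} can be corroborated at the \texttt{vote-1} level only by two separate blocking sets at two separate views. Handling this disjunction, and checking that the ordering $v'\le\tilde v<\tilde v'<v$ demanded by \Cref{case:two block sets} matches precisely what \Cref{claim:blocking_claims_safe} delivers, is the crux. Once all three sub-conditions are in place, \Cref{rule:checking-safe-proposal} is satisfied and the node determines $val$ safe, completing the proof.
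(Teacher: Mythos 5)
Your proposal is correct and takes essentially the same route as the paper's proof: the same case split on \Cref{rule:picking-safe-proposal} (view $0$, \Cref{case:no_vote_leader}, \Cref{case:highest_vote_leader}), the same quorum-intersection arguments mapping \Cref{case:case 1} to \Cref{2bi} and \Cref{case:case 2} to \Cref{2bii} via the fact that a \texttt{vote-4} presupposes a quorum of \texttt{vote-3} messages, and the same invocation of \Cref{claim:blocking_claims_safe} to discharge \Cref{case:case 3} into the disjunction \Cref{case:one block set}/\Cref{case:two block sets}. One minor precision: take $q'$ to be \emph{exactly} the set of well-behaved nodes (itself a quorum) rather than an arbitrary quorum containing them, since a Byzantine member of $q'$ could report in its \texttt{proof} message a \texttt{vote-4} it never actually sent, spoiling \Cref{rule:checking-safe-proposal} for that particular quorum even though the all-well-behaved sub-quorum still witnesses safety.
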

\begin{proof}
  According to the rule that a well-behaved leader uses to propose a safe value (\Cref{rule:picking-safe-proposal}), there are three cases. 
  
  First, if $v = 0$, then all well-behaved nodes trivially determine that the leader's value is safe. 

  Second (\Cref{case:no_vote_leader}), suppose that $v \neq 0$ and that the leader proposes $val$ because a quorum $q$ reports not sending any \texttt{vote-3} messages. Then, there is an entirely well-behaved blocking set $b$ that never sent any \texttt{vote-3} messages. Since \texttt{vote-4} messages are sent in response to a quorum of \texttt{vote-3} messages, and since a quorum and a blocking set must have a well-behaved node in common, we conclude that no well-behaved node ever sent a \texttt{vote-4} message. Thus, once a well-behaved node $i$ receives \texttt{proof} messages from all other well-behaved nodes, $i$ concludes that the proposal is safe according to \Cref{rule:checking-safe-proposal} \Cref{case:no_votes}.

  Third (\Cref{case:highest_vote_leader}), suppose that the view is not 0 and we have a quorum $q$ and a view $v'< v$ such that:
  \begin{enumerate}
    \item no member of $q$ sent any \texttt{vote-3} messages for a
      view strictly higher than $v'$, and
      \label{item:one}
    \item any member of $q$ that sent a \texttt{vote-3} message in
      view $v'$ did so with value $val$, and
      \label{item:two}
    \item there is a blocking set $b$ (e.g. $f+1$
      nodes) that all claim in their \texttt{suggest} messages that
      $val$ is safe at $v'$ (see \Cref{rule:claims_safe_leader}).
      \label{item:three}
  \end{enumerate}

  We make the following observations:
  \begin{enumerate}
    \item[a)] By \Cref{item:one} above, no well-behaved node sent any \texttt{vote-4} message in any view higher than $v'$; otherwise, a quorum would have sent the corresponding \texttt{vote-3} messages and, by the quorum-intersection property, this contradicts \Cref{item:one}.
    \item[b)] By \Cref{item:two} above, we can deduce any well-behaved node that sent a \texttt{vote-4} message in view $v'$ did so for value $val$. This can be proved by contradiction. Suppose there is a well-behaved node sent a \texttt{vote-4} message for value $val'\neq val$ in view $v'$. In this case, it implies that a well-behaved blocking set $b$ sent \texttt{vote-3} message for value $val'$ in view $v'$. According to \Cref{item:two} above, and using the quorum-intersection property, the quorum $q$ with $val$ intersect with blocking set $b$ with $val'$, and this intersection includes at least one well-behaved node. This situation leads to a contradiction.  
    \item[c)] By \Cref{item:three}, there is a well-behaved node that claims that $val$ is safe in $v'$ in its \texttt{suggest} message. By \Cref{claim:blocking_claims_safe}, we conclude that there is a blocking set $b$ composed entirely of well-behaved nodes that claim in their \texttt{proof} messages that $val$ is safe at $v'$, or there are two blocking sets $\tilde{b}$ and $\tilde{b}'$ both composed entirely of well-behaved nodes such that, every member of $\tilde{b}$ claims in its \texttt{proof} message that $\tilde{val}$ is safe at $\tilde{v}$ where $v'\le \tilde{v} < v$, and every member of $\tilde{b}'$ claims in its \texttt{proof} message that $\tilde{val}'\neq \tilde{val}$ is safe at $\tilde{v}'$ where $\tilde{v}<\tilde{v}'<v$.
  \end{enumerate}
  By Items a), b), and c) and \Cref{rule:checking-safe-proposal}, we conclude that, once every well-behaved node has received a \texttt{proof} message from every other well-behaved node, every well-behaved node determines that the leader's proposal is safe.
\end{proof}

The final lemma on liveness shows if the leader's proposal is determined safe by all the well-behaved nodes, then the protocol will proceeds smoothly and a decision will be reached by all the well-behaved nodes.

\begin{lemma}
  \label{claim:if_safe_then_termination}
  If all well-behaved nodes determine that the leader's value is safe, then a
  decision is made by all well-behaved nodes.
\end{lemma}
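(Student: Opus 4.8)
The plan is to exhibit a cascade of votes set off by the shared determination of safety, using the key observation that the set of well-behaved nodes is itself a quorum. Since there are $n-f$ well-behaved nodes and a quorum is any set of at least $n-f$ nodes, the well-behaved nodes together constitute a quorum. Moreover, because the leader is well-behaved it broadcasts only one \texttt{proposal} in the view, so the value $val$ that the well-behaved nodes determine safe is common to all of them.

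First I would combine the hypothesis with the voting step of the view: once a well-behaved node determines the leader's proposal $val$ safe, it broadcasts a \texttt{vote-1} message for $val$, so every one of the $n-f$ well-behaved nodes broadcasts \texttt{vote-1} for the common value $val$. Next I would climb the vote sequence one level at a time. Because the view runs after GST the network is synchronous, so the quorum of \texttt{vote-1} messages for $val$ sent by the well-behaved nodes reaches every well-behaved node, each of which then sends \texttt{vote-2} for $val$. Applying the same reasoning three more times, a quorum of \texttt{vote-2} for $val$ triggers \texttt{vote-3}, a quorum of \texttt{vote-3} triggers \texttt{vote-4}, and finally a quorum of \texttt{vote-4} for $val$ reaches every well-behaved node. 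Since each well-behaved node thus receives a quorum of \texttt{vote-4} messages for the same value $val$, all of them decide $val$ and the lemma follows.

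The step I expect to require the most care is ensuring that no well-behaved node abandons the view before the cascade completes, since the vote propagation is only useful if every well-behaved node remains in the view until it receives the quorum of \texttt{vote-4} messages. This is exactly what the $9\Delta$ timeout is designed to guarantee: after GST the view-change times of well-behaved nodes differ by at most $2\Delta$, and the proposal together with the four voting phases costs at most $6\Delta$ more, so the whole process finishes within $8\Delta < 9\Delta$ and no timer expires prematurely. One should also note that Byzantine nodes sending conflicting votes cannot interfere, because at every level each well-behaved node already sees a quorum of well-behaved votes for $val$, which alone suffices to advance it to the next phase.
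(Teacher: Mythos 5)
Your proof is correct and follows essentially the same route as the paper's: since the well-behaved nodes form a quorum, their common determination of safety triggers the cascade of \texttt{vote-1} through \texttt{vote-4} messages for the leader's value, after which every well-behaved node decides. The extra care you take with the $9\Delta$ timeout and with Byzantine non-interference makes explicit what the paper leaves implicit under its standing assumption that the view is synchronous and led by a well-behaved leader, but the core argument is identical.
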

\begin{proof}
  If all well-behaved nodes determine that the leader's value $val$ is safe, they will broadcast \texttt{vote-1} messages for $val$. Consequently, all the well-behaved nodes receive $n-f$ \texttt{vote-1} messages for $val$  and broadcast \texttt{vote-2} messages for $val$. Continuing in the same logic, all well-behaved nodes broadcast \texttt{vote-3} and \texttt{vote-4} messages for $val$. Upon receiving $n-f$ \texttt{vote-4} messages for $val$, a decision is made by all well-behaved nodes.
\end{proof}

\subsection{Safety Argument}
\label{sec:safety}

The first two lemmas show that within a view, well-behaved nodes send \textup{\texttt{vote}} messages for the same value and decide the same value. These lemmas follow directly from the evolution of a view as discussed in Section~\ref{sec:evolution-view} and can primarily be proven through the quorum-intersection property.
\begin{lemma}
\label{safety corollary}
  If two well-behaved nodes send a \textup{\texttt{vote}} message in view $v$, where \textup{\texttt{vote}} can be one of \textup{\{\texttt{vote-2}},\textup{\texttt{vote-3},\texttt{vote-4}\}}, for values $val$ and $val'$ separately, then $val=val'$.
\end{lemma}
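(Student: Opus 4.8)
The plan is to prove the statement by induction on the vote level $k \in \{2,3,4\}$, establishing the strengthened claim that \emph{all} well-behaved nodes that send a \texttt{vote-k} message in view $v$ do so for one and the same value. The engine of the argument is the quorum-intersection property: since any two quorums have size at least $n-f$ and $n \geq 3f+1$, their intersection contains at least $2(n-f)-n = n-2f \geq f+1$ nodes, and therefore at least one well-behaved node.

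First I would record the base fact, taken directly from the evolution of a view (\Cref{sec:evolution-view}): a well-behaved node broadcasts at most one \texttt{vote-1} message per view. For the inductive step, suppose two well-behaved nodes $i$ and $j$ send \texttt{vote-k} messages in view $v$ for values $val$ and $val'$ respectively. By the protocol, $i$ must have received a quorum $q_i$ of \texttt{vote-(k-1)} messages for $val$ and $j$ a quorum $q_j$ of \texttt{vote-(k-1)} messages for $val'$. By quorum intersection there is a well-behaved node $w \in q_i \cap q_j$; since $w$ is well-behaved, the \texttt{vote-(k-1)} messages it contributed to $q_i$ and to $q_j$ are genuine messages it actually sent, so $w$ sent a \texttt{vote-(k-1)} message for $val$ and one for $val'$.

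At this point I would invoke the property that a well-behaved node sends \texttt{vote-(k-1)} for at most one value in a view. For $k=2$ this is exactly the base fact about \texttt{vote-1}, forcing $val = val'$. For $k \in \{3,4\}$, it follows from the inductive hypothesis applied at level $k-1$ with the two nodes taken to coincide (both equal to $w$): the hypothesis says all well-behaved \texttt{vote-(k-1)} messages in view $v$ carry the same value, so in particular $w$'s two such messages agree, giving $val = val'$. This closes the induction and yields the lemma for $k \in \{2,3,4\}$.

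The main obstacle is conceptual rather than technical: the statement is deliberately false at the \texttt{vote-1} level, because a Byzantine leader may equivocate and send different proposals to different well-behaved nodes, who then legitimately cast \texttt{vote-1} for different values. The argument only gains traction once a quorum of \texttt{vote-1} messages for a single value is required to progress, which is why the induction must start at \texttt{vote-2} and why the uniqueness of each well-behaved node's \texttt{vote-1} message is the indispensable base case. Care must also be taken to allow the two well-behaved nodes in the inductive statement to coincide, so that ``all well-behaved \texttt{vote-k} messages agree'' correctly specializes to ``each well-behaved node sends at most one distinct \texttt{vote-k} value'', which is precisely what the next level of the induction consumes.
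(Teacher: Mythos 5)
Your proof is correct and follows essentially the same route as the paper's: quorum intersection on the \texttt{vote-(k-1)} messages plus the uniqueness of each well-behaved node's \texttt{vote-1} per view. The only difference is presentational — the paper proves the \texttt{vote-2} case and dismisses \texttt{vote-3}/\texttt{vote-4} with ``by similar logic,'' whereas your explicit induction (allowing the two nodes to coincide) spells out precisely the per-node uniqueness of \texttt{vote-2}/\texttt{vote-3} values that the paper's shorthand silently relies on, which is a slightly more careful rendering of the same argument.
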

\begin{proof}
  Observe two well-behaved nodes $i$ and $j$ send a \texttt{vote-2} message in view $v$ for $val$ and $val'$ respectively. Then, node $i$ received \texttt{vote-1} messages in view $v$ for $val$ from a quorum of nodes $q_1$, and node $j$ received \texttt{vote-1} messages in view $v$ for $val'$ from a quorum of nodes $q_2$. Since there are only $n$ nodes, $q_1 \cap q_2$ must contain at least one well-behaved node. Therefore, $val=val'$ as well-behaved nodes only send one \texttt{vote-1} message in a view. By similar logic, the same result can be concluded for \texttt{vote-3} and \texttt{vote-4}.
\end{proof}

\begin{lemma} \label{lemma: safety inner view}
  If a well-behaved node decides a value $val$ in view $v$, then no well-behaved node decides other values in view $v$.
\end{lemma}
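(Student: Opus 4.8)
The plan is to prove Lemma~\ref{lemma: safety inner view} by reducing it to the already-established within-view agreement on \texttt{vote-4} messages captured in Lemma~\ref{safety corollary}. The statement says that if a well-behaved node decides $val$ in view $v$, no well-behaved node decides a different value in the same view. Since a decision on $val$ in view $v$ happens precisely when a node receives a quorum of \texttt{vote-4} messages for $val$ in view $v$, the essential content is that two quorums of \texttt{vote-4} messages in the same view cannot carry different values.

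First I would unfold the definition of "decides". Suppose for contradiction that a well-behaved node $i$ decides $val$ in view $v$ and another well-behaved node $j$ decides $val' \neq val$ in view $v$. By the deciding rule in step~7 of the evolution of a view, $i$ received \texttt{vote-4} messages for $val$ in view $v$ from a quorum $q_1$, and $j$ received \texttt{vote-4} messages for $val'$ in view $v$ from a quorum $q_2$.

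Next I would invoke quorum intersection. Since there are only $n$ nodes and each quorum has size at least $n-f$, the intersection $q_1 \cap q_2$ must contain at least one well-behaved node $k$ (the standard counting argument: $|q_1| + |q_2| - n \geq 2(n-f) - n = n - 2f \geq f+1 > f$, so the intersection exceeds the number of Byzantine nodes and hence contains a well-behaved node). This node $k$ sent a \texttt{vote-4} message for $val$ in view $v$ (as a member of $q_1$) and a \texttt{vote-4} message for $val'$ in view $v$ (as a member of $q_2$). Applying Lemma~\ref{safety corollary} with the choice \texttt{vote} = \texttt{vote-4} to the single well-behaved node $k$ — or equivalently noting directly that a well-behaved node casts at most one \texttt{vote-4} in a view — forces $val = val'$, contradicting our assumption. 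Hence no such $j$ exists.

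I do not anticipate a serious obstacle here, as this is a routine within-view argument entirely parallel to the proof of Lemma~\ref{safety corollary}; the only point requiring minor care is ensuring Lemma~\ref{safety corollary} is stated in a form that directly yields the conclusion. Lemma~\ref{safety corollary} guarantees that any two well-behaved \emph{senders} of \texttt{vote-4} in view $v$ agree on the value, but deciding involves a \emph{quorum} of \texttt{vote-4} messages that may include Byzantine senders. The cleanest route, therefore, is not to appeal to Lemma~\ref{safety corollary} wholesale but to redo the one-line intersection argument on the two deciding quorums, using the fact that a single well-behaved node in the intersection cannot have sent \texttt{vote-4} for two distinct values in the same view. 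This keeps the proof self-contained and avoids any mismatch between "senders agree" and "deciding quorums agree".
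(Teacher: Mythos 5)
Your proof is correct and takes essentially the same route as the paper's: both reduce the claim to quorum intersection together with the within-view vote agreement of Lemma~\ref{safety corollary}. The only cosmetic difference is that you place the contradiction directly at the \texttt{vote-4} level (intersecting the two deciding quorums, or equivalently noting that each deciding quorum contains a well-behaved \texttt{vote-4} sender, since $n-f>f$, and any two such senders must agree by Lemma~\ref{safety corollary}), whereas the paper, after invoking that lemma, cascades the argument down through \texttt{vote-3} to \texttt{vote-1} and derives the contradiction there; your more direct variant is equally valid.
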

\begin{proof}
  In view $v$, if a well-behaved node decides value $val$, it must have first received \texttt{vote-4} messages for value $val$ from a quorum of nodes. By \Cref{safety corollary}, all well-behaved nodes that send 
  \texttt{vote-4} messages in view $v$ do so with the same value $val$. Well-behaved nodes only send \texttt{vote-4} messages for a value $val$ after receiving at least $n-f$ \texttt{vote-3} messages for the value $val'$. Since $n-f>f$, at least one of those messages must have been received from a well-behaved node, and thus $val'=val$. By applying similar reasoning, every well-behaved node that sends a \texttt{vote-1} message does so with value $val''=val$. If two well-behaved nodes separately decide different values, it implies that two quorums of nodes must have sent \texttt{vote-1} messages for two different values, leading to a contradiction.
\end{proof}

In the following lemma, we demonstrate that if a value is decided in view $v$, then in  later views, no well-behaved node determines a different value as safe. We approach the proof by contradiction, starting with the assumption that there exists a scenario where a different value is determined safe (in accordance with \Cref{rule:checking-safe-proposal}), focusing on the first view where this occurs. Then we compare view $v$ with the view appeared in the \Cref{rule:checking-safe-proposal} \Cref{2bi} (denoted as $v''$), leading to three cases. It becomes straightforward to check that cases where $v''<v$ and $v''=v$ result in contradictions due to the quorum-intersection property. Then we discuss the case where $v''>v$, examining \Cref{case:one block set} and \Cref{case:two block sets}. For \Cref{case:one block set}, we verify the cases outlined in \Cref{rule:claims_safe} and find that, under the assumption that this is the first view a different value is determined safe, they either do not exist or lead to contradictions. In the case of \Cref{case:two block sets}, we establish that it is impossible for the two blocking sets to meet any of the four possible combinations of \Cref{rule:claims_safe} \Cref{case:second highest vote-1} and \Cref{rule:claims_safe} \Cref{case:highest vote-1} by the logic of views and values. Additionally, it is straightforward to verify that \Cref{rule:claims_safe} \Cref{case:r'=0} cannot hold.
\begin{lemma} \label{lemma: safety across view}
  If a well-behaved node decides a value $val$ in view $v$, then no well-behaved node sends a \textup{\texttt{vote-1}} message for a different value $val'$ of any later view $v'>v$.
\end{lemma}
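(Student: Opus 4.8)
The plan is to argue by contradiction. Suppose a well-behaved node decides $val$ in view $v$ while some well-behaved node nonetheless sends a \texttt{vote-1} for a value $val'\neq val$ in a later view; let $v^\star>v$ be the \emph{first} view in which any well-behaved node sends a \texttt{vote-1} for a value different from $val$. The first step is to extract what the decision forces: receiving a quorum of \texttt{vote-4} for $val$ means a quorum $Q$ of nodes sent \texttt{vote-4} for $val$ in view $v$, and by \Cref{safety corollary} every well-behaved node that sent a \texttt{vote-4} in view $v$ did so for $val$. The engine of the proof is then a single quorum-intersection observation, used repeatedly: any quorum $q$ of \texttt{proof} messages collected in view $v^\star$ satisfies $|q\cap Q|\ge n-2f\ge f+1$, so $q\cap Q$ contains a well-behaved node $w$, and $w$ truthfully reports in its \texttt{proof} message a \texttt{vote-4} for $val$ in view $v$.

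Next I would unfold \Cref{rule:checking-safe-proposal}, which the offending node in $v^\star$ must have applied to justify its \texttt{vote-1} for $val'$. Case \Cref{case:no_votes} (no member of $q$ sent any \texttt{vote-4} before $v^\star$) is contradicted immediately by $w$, whose reported highest \texttt{vote-4} sits at a view $\ge v<v^\star$. So we are in case \Cref{case:highest_vote} with some view $v''<v^\star$, and I would split on how $v''$ compares with $v$. If $v''<v$, then $w$'s reported highest \texttt{vote-4} is at a view $\ge v>v''$, violating \Cref{2bi}. If $v''=v$, then \Cref{2bi} (which holds since the node determined $val'$ safe) forces $w$ to report no \texttt{vote-4} above $v''=v$, so $w$'s reported highest \texttt{vote-4} sits exactly at view $v$ carrying $val$, and \Cref{2bii} then demands $val=val'$, a contradiction. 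Both are quick.

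The real work is the case $v<v''<v^\star$, handled through \Cref{2biii}, where the minimality of $v^\star$ is the crucial lever: every well-behaved \texttt{vote-1} cast in the open interval $(v,v^\star)$ carries $val$. For the single-blocking-set branch (\Cref{case:one block set}), a well-behaved member of $b$ claims $val'$ safe at $v''>0$, so by \Cref{rule:claims_safe} it satisfies \Cref{case:highest vote-1} or \Cref{case:second highest vote-1}; either way its \texttt{proof} exhibits a well-behaved \texttt{vote-1} at a view inside $(v,v^\star)$ carrying a value different from $val$ (directly $val'$ in the first sub-case, or one of the two distinct-valued votes in the second), contradicting minimality. For the two-blocking-set branch (\Cref{case:two block sets}), minimality first rules out \Cref{case:second highest vote-1} for either witness, since that case would place two of one well-behaved node's \texttt{vote-1} messages, carrying distinct values, at distinct views both inside $(v,v^\star)$, whereas minimality forces every such \texttt{vote-1} to carry $val$. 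Hence both witnesses satisfy \Cref{case:highest vote-1}, pinning $\widetilde{val}$ and $\widetilde{val}'$ to the value of a \texttt{vote-1} cast in $(v,v^\star)$, namely $val$; this yields $\widetilde{val}=\widetilde{val}'=val$, contradicting $\widetilde{val}\neq\widetilde{val}'$.

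I expect the two-blocking-set sub-case of \Cref{2biii} under $v''>v$ to be the main obstacle, because it is the only place where the newly introduced double-blocking-set condition interacts non-trivially with the minimality hypothesis, and it requires simultaneously tracking the views $v''\le\tilde v<\tilde v'<v^\star$ and the values $\widetilde{val},\widetilde{val}'$ to force the value collision. The remaining bookkeeping --- verifying that all the views in play exceed $0$ so \Cref{case:r'=0} never applies, and that a \texttt{proof} message sent at the start of view $v^\star$ only reports votes from views strictly below $v^\star$ --- is routine.
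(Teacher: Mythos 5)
Your proof is correct and follows essentially the same route as the paper's: a minimal-counterexample contradiction, quorum intersection to kill case~\ref{case:no_votes} and the subcases $v''<v$ and $v''=v$, and then the minimality of $v^\star$ to rule out every combination of \Cref{rule:claims_safe} cases under both branches of \Cref{2biii} when $v<v''<v^\star$. The only cosmetic difference is that you invoke \Cref{safety corollary} explicitly to pin the value of the witness's \texttt{vote-4}, which the paper does implicitly.
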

\begin{proof}
  Assume by contradiction that a well-behaved node $i$ decides value $val$ in view $v$, and let $v' > v$ be the first view in which some well-behaved node (denoted as $j$) sends \texttt{vote-1} message for a value other than $val$ (denoted as $val'$). Before sending \texttt{vote-1}, node $j$ receives a quorum $q'$ of \texttt{proof} messages in view $v'$, and determines the proposal from the leader for value $val'$ is safe. As node $i$ decides value $val$ in view $v$, there must be a quorum $q$ that has sent \texttt{vote-4} messages in view $v$. Then we discuss cases of \Cref{rule:checking-safe-proposal} \Cref{case:no_votes} and \Cref{rule:checking-safe-proposal} \Cref{case:highest_vote} individually. 
  \begin{enumerate}
  \item[1.]
  Since $q \cap q'$ must contain one well-behaved node who has sent \texttt{vote-4} in view $v$, the case of \Cref{rule:checking-safe-proposal} \Cref{case:no_votes} is impossible. 
  \item[2.]
  Now consider \Cref{rule:checking-safe-proposal} \Cref{case:highest_vote}, and there is a view $v''<v'$ satisfies conditions in \Cref{rule:checking-safe-proposal} \Cref{case:highest_vote}. In this case, we discuss three cases separately: $v''<v$, $v''=v$, and $v''>v$. 
   \begin{enumerate}
     \item[a)]
     When $v'' < v$, a quorum $q$ sends \texttt{vote-4} messages in view $v$. Two quorums $q$ and $q'$ intersect at least one well-behaved node, who has sent a \texttt{vote-4} message in view $v>v''$,  contradicting \Cref{rule:checking-safe-proposal} \Cref{2bi}. 
     \item[b)]
     When $v'' = v$, again by quorum intersection, there must exists at least one well-behaved node in $q'$ that has sent a \texttt{vote-4} message for value $val$ in view $v''=v$; however, any member of $q'$ should not have sent \texttt{vote-4} message for any value other than $val'$ according to \Cref{rule:checking-safe-proposal} \Cref{2bii}. This is a contradiction.
     \item[c)]
     When $v'' > v$, if \Cref{rule:checking-safe-proposal} \Cref{case:one block set} is satisfied, there is a blocking set $b$ where all members claim in their \texttt{proof} messages that $val'$ is safe at $v''$. Notably, $b$ must contain at least one well-behaved node, which we denote as $k$. Subsequently, we examine each condition in \Cref{rule:claims_safe} for $k$.
     \begin{enumerate}
      \item[i.]
      Since $v''>v$, then $v''\neq 0$, making it impossible for \Cref{rule:claims_safe} \Cref{case:r'=0} to hold. 
      \item[ii.]
      As for the circumstance of \Cref{rule:claims_safe} \Cref{case:highest vote-1}, if $val'$ is claimed safe at $v''$ in $k$'s \texttt{proof} message, there is a view $v'''$ with $v''\le v''' < v'$ where $k$ sends a \texttt{vote-1} message for value $val'$. However, $v'$ is the first view where some node sends \texttt{vote-1} for $val'$. Thus, this situation is nonexistent.
      \item[iii.]
      According to \Cref{rule:claims_safe} \Cref{case:second highest vote-1}, $k$ sends a \texttt{vote-1} message for value $\widebar{val}$ in view $\bar{v}$ and a \texttt{vote-1} message for $\widebar{val}'$ in view $\bar{v}'$, where $v''\le \bar{v}<\bar{v}'<v'$ and $\widebar{val}\neq \widebar{val}'$. Since $v'$ is the first view where a \texttt{vote-1} message for a value other than $val$ is sent, we have $\widebar{val}=\widebar{val}'=val$. This is a contradiction.
     \end{enumerate}
     \item[d)]
      When $v'' > v$, if \Cref{rule:checking-safe-proposal} \Cref{case:two block sets} is satisfied, there is a blocking set $\tilde{b}$ where all members claim in their \texttt{proof} messages that $\widetilde{val}$ is safe at $\tilde{v}$ where $v'' \le \tilde{v}<v'$, and there is a blocking set $\tilde{b}'$ where all members claim in their \texttt{proof} messages that $\widetilde{val}'\neq \widetilde{val}$ is safe at $\tilde{v}'$ where $\tilde{v}< \tilde{v}'<v'$. Notably, $\tilde{b}$ must contain at least one well-behaved node, denoted as $m$, and similarly, $\tilde{b}'$ must contain at least one well-behaved node, denoted as $m'$. We discuss the scenarios involving combinations of \Cref{rule:claims_safe} for $m$ and $m'$.
        \begin{enumerate}
          \item[i.]
            Since $v''>v$, then $v''\neq 0$, making it impossible for \Cref{rule:claims_safe} \Cref{case:r'=0} to hold for either $m$ and $m'$. 
          \item[ii.]
            Next, we examine the scenario in which either of $m$ or $m'$ satisfies \Cref{rule:claims_safe} \Cref{case:second highest vote-1}. Without loss of generality, we assume $m$ satisfies \Cref{rule:claims_safe} \Cref{case:second highest vote-1}. Accordingly, $m$ sends a \texttt{vote-1} message for value $\widebar{val}$ in view $\bar{v}$ and a \texttt{vote-1} message for $\widebar{val}'\neq \widebar{val}$ in view $\bar{v}'$, where $\tilde{v}\le \bar{v}<\bar{v}'<v'$. Since $v'$ is the first view wherein a \texttt{vote-1} message for a value different from $val$ is sent, we deduce that $\widebar{val}=\widebar{val}'=val$. Thus, this situation does not exist. Likewise, the same holds true for node $m'$. Therefore, either $m$ or $m'$ satisfies \Cref{rule:claims_safe} \Cref{case:second highest vote-1}.
          \item[iii.]
            Then, only one case remains, that is, both $m$ and $m'$ follows \Cref{rule:claims_safe} \Cref{case:highest vote-1}. Accordingly, since $\widetilde{val}$ is claimed safe at $\tilde{v}$ in $m$'s \texttt{proof} message, there is a view $\bar{v}$ with $\tilde{v}\le \hat{v} < v'$ where $m$ sends a \texttt{vote-1} message for value $\widetilde{val}$. Since $v'$ is the first view where some node sends \texttt{vote-1} for a value different from $val$. Then, we can deduce that $\widetilde{val}=val$. Similarly, we have $\widetilde{val}'=val$. However, this contradicts the premise of $\widetilde{val}\neq \widetilde{val}'$. Consequently, this situation is nonexistent.
      \end{enumerate}
    \end{enumerate}
  \end{enumerate}
     Thus, in no circumstances does node $j$ decide that leader's proposal for value $val'$ safe. Therefore, it never sends a \texttt{vote-1} message for a value $val'$ in view $v'>v$, which concludes the claim.
\end{proof}

\subsection{Main Theorem}
Building on these arguments, we now present the main theorem. 

\begin{theorem}
\label{thm:main}
TetraBFT guarantees agreement, validity and termination in partial synchrony.  Specifically, it guarantees agreement and validity even in periods of asynchrony, while termination is assured after GST.
\end{theorem}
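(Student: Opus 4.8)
The plan is to prove the three consensus properties by assembling the lemmas already established, treating each property separately and deploying the liveness chain and the safety lemmas as black boxes.

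\paragraph{Validity.}
First I would dispatch validity, which should be the easiest. If all nodes are well-behaved and start with the same input $val$, then view $0$ has a well-behaved leader, all values are trivially safe by \Cref{rule:picking-safe-proposal}, and the leader proposes its own initial value $val$ by the default convention stated in \Cref{sec:evolution-view}. Every well-behaved node determines $val$ safe (trivially, since $v=0$), votes through the four phases, and decides $val$. The only subtlety is that a node could in principle decide in a later view; so I would argue that any value decided must equal $val$ by invoking \Cref{lemma: safety across view}, which guarantees no well-behaved node ever sends a \texttt{vote-1} for a value other than the first-decided value in later views. Since all nodes are honest and all started with $val$, the first decision is $val$, and no conflicting value can ever be decided.

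\paragraph{Agreement.}
Next I would prove agreement, combining the within-view and cross-view safety lemmas. Suppose two well-behaved nodes decide $val$ and $val'$ in views $v$ and $v'$ respectively; without loss of generality $v \le v'$. If $v = v'$, \Cref{lemma: safety inner view} immediately gives $val = val'$. If $v < v'$, then by \Cref{lemma: safety across view} no well-behaved node sends a \texttt{vote-1} for any value $\neq val$ in any view $> v$; in particular, the decision of $val'$ in view $v'$ required a quorum of \texttt{vote-4} messages, which (tracing back through \Cref{safety corollary} as in the proof of \Cref{lemma: safety inner view}) forces a quorum of well-behaved \texttt{vote-1} messages for $val'$ in view $v'$, so $val' = val$. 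This establishes agreement, and because both safety lemmas rely only on quorum intersection and never on synchrony, agreement holds even during asynchrony.

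\paragraph{Termination.}
The hard part will be termination, since it is the only property requiring the partial-synchrony argument and the view-synchronization machinery. The plan is to argue that after GST, the view-change mechanism eventually synchronizes all well-behaved nodes into a common view $v^\star$ led by a well-behaved leader, spending them into that view within $2\Delta$ of one another (as justified by the $9\Delta$ timeout analysis in \Cref{sec:evolution-view}). Once in such a view, the liveness chain applies: by \Cref{claim:safety existence} the leader can determine a safe value from the quorum of \texttt{suggest} messages; by \Cref{claim:proposal_determined_safe} every well-behaved node determines the leader's proposal safe once it has collected \texttt{proof} messages from all other well-behaved nodes; and by \Cref{claim:if_safe_then_termination} a decision then follows. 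The main obstacle is establishing that a synchronized view with a well-behaved leader actually recurs and lasts long enough: I would need to show that the $f+1$-echo rule prevents well-behaved nodes from being stranded in stale views, that the round-robin (or analogous) leader assignment guarantees infinitely many well-behaved leaders, and that the $9\Delta$ timeout provides enough slack for the seven message delays plus the $2\Delta$ view-entry skew. I would carry this out by a standard DLS-style argument, picking the first well-behaved-led view that begins sufficiently after GST and after all well-behaved nodes have abandoned smaller views, then invoking the liveness lemmas to conclude that decision is reached before the timeout expires, so no further view change is triggered.
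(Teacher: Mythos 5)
Your decomposition matches the paper's exactly: agreement from \Cref{lemma: safety inner view} and \Cref{lemma: safety across view}, termination from the chain \Cref{claim:safety existence}, \Cref{claim:proposal_determined_safe}, \Cref{claim:if_safe_then_termination}, and validity handled separately. Your agreement argument is essentially the paper's own (the paper likewise traces the quorum of \texttt{vote-4} messages backward through \texttt{vote-3}/\texttt{vote-2} to \texttt{vote-1} and applies \Cref{lemma: safety across view}, noting that only quorum intersection is used, so this holds under asynchrony). On termination you are in fact more careful than the paper: the paper's proof is a one-line citation of the three liveness lemmas, leaving the view-synchronization preconditions (a post-GST view with a well-behaved leader that all well-behaved nodes enter within $2\Delta$ of one another, the $f+1$-echo rule, and the $9\Delta$ timeout budget) implicit in the standing assumption stated at the top of \Cref{sec:Liveness Argument}; your DLS-style argument makes explicit what the paper's terse proof arguably needs.

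The one soft spot is validity. The paper proves it directly: since all nodes are well-behaved and share input $val$, every \texttt{vote-1} from a well-behaved node is for $val$, hence (by the $n-f$ thresholds and quorum intersection) every \texttt{vote-2}, \texttt{vote-3}, \texttt{vote-4}, and therefore every decision, is for $val$. You instead route the later-view case through \Cref{lemma: safety across view}, but that lemma only constrains behavior \emph{after} some decision exists; your key assertion that ``the first decision is $val$'' is exactly what validity demands and is not delivered by that lemma. If the first decision happens in a view $v>0$ (possible, since pre-GST message loss can make view $0$ abort), you still owe an induction over views showing that the only value a well-behaved leader can ever propose --- and hence the only value well-behaved nodes can ever vote for --- is $val$: either no prior \texttt{vote-3} messages exist, so all values are safe and the leader defaults to its input $val$, or \Cref{rule:picking-safe-proposal} pins the proposal to the value of earlier votes, which are all $val$ by the induction hypothesis. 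With that induction added your argument closes; without it, the validity paragraph assumes its own conclusion.
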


\begin{proof}
The proof for each property is provided separately.

\textbf{Termination.} By \Cref{claim:safety existence}, \Cref{claim:proposal_determined_safe} and \Cref{claim:if_safe_then_termination}, we conclude that a decision is reached by all well-behaved nodes. 

\textbf{Agreement.} We consider two scenarios involving well-behaved nodes $i$ and $j$ deciding value $val$ and $val'\neq val$ respectively. If $i$ and $j$ decide value in the same view, \Cref{lemma: safety inner view} implies that $val = val'$, as a contradiction would arise otherwise. Now, suppose $i$ decides $val$ $j$ in view $v$ and $j$ decides value $val'\neq val$ in a later view $v'>v$. According to \Cref{lemma: safety across view}, no well-behaved node would send a \texttt{vote-1} message with value different from $val$ in view $v'$ or subsequent views. Similarly, no well-behaved node would send a \texttt{vote-2} message for a different value $val'$ in any $v'>v$. That is because, to do so, it must have first received $n-f$ \texttt{vote-1} messages for value $val'$. For the same reasons, no well-behaved node sends \texttt{vote-3} or \texttt{vote-4} messages for a value $val'\neq val$ and ultimately decides it in view $v'$. Based on these analyses, it's evident that once a value $val$ is decided by a well-behaved node, no other well-behaved node can decide a different value $val'\neq val$ at any stage of the protocol. 

\textbf{Validity.} Assume that all nodes are well-behaved and that they have the same input value $val$. Consider a well-behaved leader proposes the initial value $val$, and each well-behaved node sends vote messages do so for value $val$ in the previous views. Consequently, by assumption, the initial value $val$ is determined safe. Then \texttt{vote-1} message sent by well-behaved nodes will be for $val$. If a node sends a \texttt{vote-2} messages for a different value $val'$, it must have first received $n-f$ \texttt{vote-1} messages for $val'$. However, since all \texttt{vote-1} messages from well-behaved nodes are for $val$, it logically follows that $val'=val$. Subsequently, if a node sends \texttt{vote-3} messages for a value $val'$, it must have received $n-f$ \texttt{vote-2} messages for value $val'$. However, since two quorums intersect at least one behave node, it is impossible that a quorum of nodes send \texttt{vote-2} for $val$ and a quorum of nodes send \texttt{vote-2} for a value $val'\neq val$, given that some well-behaved node has sent a \texttt{vote-2} message for $val$. Similarly, if a node sends \texttt{vote-4} and decides a value, that value cannot be any value other than $val$.
\end{proof}

\section{formal verification}

To gain one more degree of confidence in TetraBFT, we formalize the single-shot protocol of~\Cref{sec:single-shot} in TLA+~\cite{lamport_specifying_2002} and we use the Apalache model-checker~\cite{konnov_tla_2019} to formally verify that the agreement property holds in all possible executions of a system with 4 nodes, including one Byzantine node, that try to agree on a value among 3 different values and execute 5 views.

Since every view consists of 4 voting phases, including Byzantine behavior, the size of the state-space to explore, even within those seemingly small bounds, is staggeringly large.
In our experiments, explicitly exploring even a small fraction of this state-space was out of reach with explicit state exploration using the TLC model-checker~\cite{lamport_specifying_2002}.
Checking the agreement property by unrolling the transition relation, which is the default mode with the Apalache model-checker, is also unrealistic because, given our model of the protocol, Apalache would need to unroll a complex transition relation at least $5*4*4=80$ times (5 views times 4 voting phases times 4 nodes) to reach the end of view 5.
In our experiments, verifying the agreement property when the transition relation is unfolded only 10 times already takes too long.

Instead, to perform the verification, we provide the Apalache model-checker with a candidate inductive invariant $Inv$ implying the agreement property, and we ask Apalache to verify that the invariant is inductive.
This involves checking that $Inv$ holds in the initial state and checking that if $Inv$ holds in an arbitrary state and the system takes one single step, then $Inv$ holds again in the new state.
For a symbolic model-checker like Apalache, this is much easier than unrolling the transition relation, and Apalache successfully verifies that the invariant is inductive in about three hours on a consumer desktop machine.
The formal specification appears in full in~\Cref{sec:formal-spec} and are also available on Zenodo at \url{https://doi.org/10.5281/zenodo.19321940}.

\section{Multi-Shot TetraBFT}
\label{sec:multi}
In this section, we extend Basic TetraBFT, which enables nodes to reach consensus on a single value, to Multi-shot TetraBFT. This extension allows nodes to achieve consensus on a sequence of blocks, thereby forming a blockchain. We first demonstrate the protocol in the good case. Subsequently, building on the protocol for the ideal case, we explore scenarios involving timeouts, thereby necessitating the incorporation of a view-change mechanism.

\subsection{Multi-Shot TetraBFT in the Good Case}
In the Multi-shot TetraBFT, blocks are indexed by slot numbers. A pre-determined leader in each slot appends its block to that of the previous slot. Upon receiving the block $b_i$ for slot $i$, each node ensures that 1) a block $b_{i-1}$ for slot $i-1$ has received a quorum of votes; 2) $b_i$ extends $b_{i-1}$. Once both conditions are satisfied, a well-behaved node then broadcasts a \texttt{vote} message for $b_i$. Simultaneously, upon receiving block $b_i$ and confirming both conditions, a well-behaved leader for slot $i+1$ proposes a new block extending $b_i$. This proposal can effectively serve as an implicit vote to save one message per slot. 
A block is \textit{notarized} on receiving votes from a quorum of nodes. The first block in a chain of four notarized blocks with consecutive slot numbers is finalized, as well as its entire prefix in the chain.

In Fig.~\ref{fig:view0}, we give an example to illustrate the protocol under the good case (with well-behaved leaders and synchrony). Starting with view 0 of slot $s$, a well-behaved leader proposes a block with value $val_1$. On receiving $val_1$, a well-behaved leader proposes a block with value $val_2$. At the same time, each well-behaved node determines $val_1$ safe (as the view number is 0), and broadcasts a \texttt{vote} message $\langle \texttt{vote}, slot-s, view-0, value-val_1\rangle$, which represents \texttt{vote-1} for $val_1$.
In slot $s+1$, upon determining $val_2$ safe and notarizing the block for slot $s$ (i.e., receiving a quorum of \texttt{vote} messages for it), each well-behaved node broadcasts a \texttt{vote} message $\langle \texttt{vote}, slot-s+1, view-0, value-val_2\rangle$. This vote represents \texttt{vote-1} for $val_2$ and \texttt{vote-2} for $val_1$. 
Simultaneously, a well-behaved leader proposes a block with value $val_3$.
In slot $s+2$, after determining $val_3$ safe and receiving a quorum of \texttt{vote} messages for $val_2$ of slot $s+1$, a well-behaved node broadcasts a \texttt{vote} message $\langle \texttt{vote}, slot-s+2, view-0, value-val_3\rangle$. Similarly, this vote represents \texttt{vote-1} for $val_3$, \texttt{vote-2} for $val_2$ and \texttt{vote-3} for $val_1$. The process in slot $s+3$ mirrors that of slot $s+2$, hence its explanation is omitted here. After nodes receiving a quorum of \texttt{vote} messages $\langle \texttt{vote}, slot-s+3, view-0, value-val_4\rangle$, the block for slot $s+3$ is notarized and the block for slot $s$ is finalized.
\begin{figure}[hbt!]
\centering
\includegraphics[height=4.7cm]{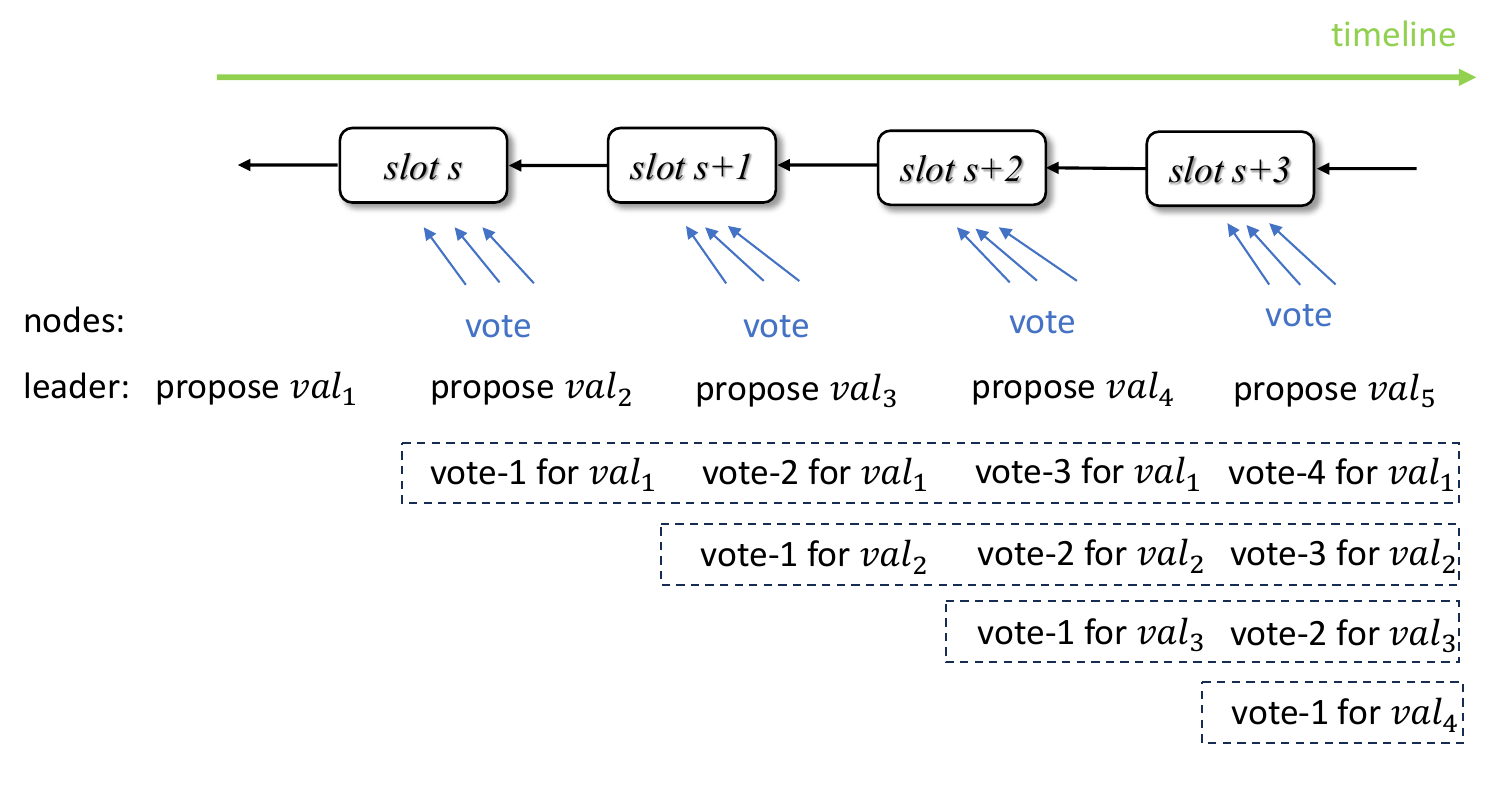}
\caption{Example of Multi-shot TetraBFT in the good case.}
\label{fig:view0}
\end{figure}

\subsection{View Change in Multi-Shot TetraBFT}

The preceding section presents an ideal scenario where a block is successfully finalized in view 0. In this section, we illustrate the  protocol's operation in instances requiring a view change due to aborted blocks (refer to \Cref{alg:view change} and \ref{alg:chained protocol}). Upon receiving a \texttt{proposal} message for the slot $s-1$, a well-behaved node starts the slot $s$ and sets the view timer. Should the $9\Delta$ timeout elapse without block finalization, the node broadcasts a \texttt{view-change} message for the next view, indicating the lowest slot number of the aborted blocks. Note that if a block is aborted, all subsequent blocks are aborted as well; however, the number of aborted blocks is limited by the protocol's finality latency, specifically to 5.
On receiving $f+1$ \texttt{view-change} messages for view $v$ and slot $s$, a node broadcasts a \texttt{view-change} message for view $v$ and slot $s$ if it has not done so for view $v$ or any higher view in slot $s$.  On receiving $n-f$ \texttt{view-change} messages for view $v$ and slot $s$, a node changes to view $v$ for all slots numbered no less than $s$ and resets the corresponding timers. Nodes keep monitoring \texttt{view-change} messages regardless of the current view.
Upon view change, a node sends \texttt{suggest}/\texttt{proof} messages for all aborted slots.
On receiving the \texttt{suggest} messages, a well-behaved leader of view $v$ and slot $s$ proposes a new block with a safe value. Subsequently, the protocol mirrors the good case, with the distinction that proposals and votes for safe values are informed by \texttt{suggest}/\texttt{proof} messages and guided by \Cref{rule:picking-safe-proposal} and \Cref{rule:checking-safe-proposal} in views other than 0.
\begin{figure}[h]
\centering
\includegraphics[height=5.0cm]{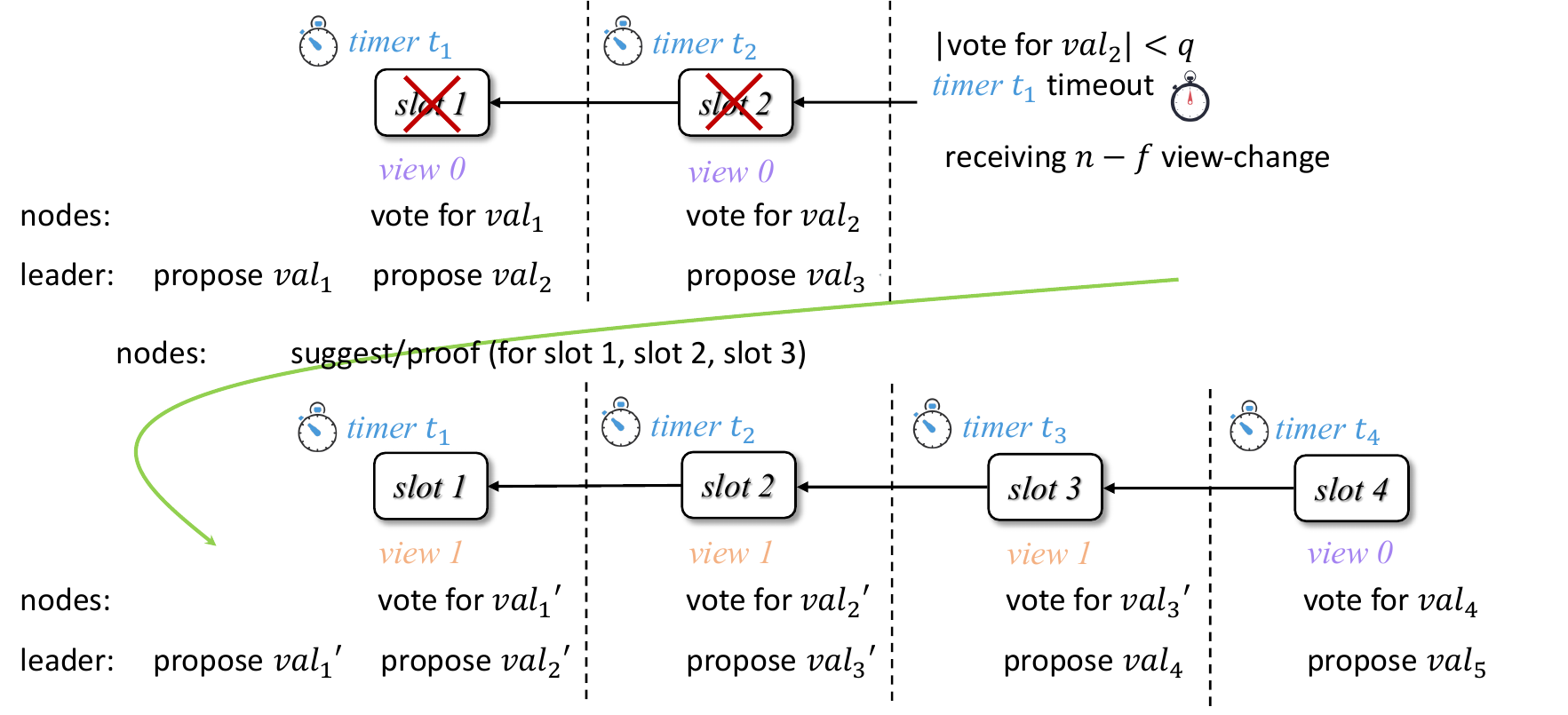}
\caption{Example of Multi-shot TetraBFT with failed blocks.}
\label{fig:block_fail}
\end{figure}

We also provide an example illustrating a block failure leading to a view change (refer to Fig.~\ref{fig:block_fail}). Starting with view 0 of slot 1, a view timer $t_1$ starts. A well-behaved leader proposes a block with value $val_1$, and subsequently nodes vote for $val_1$ and start a view timer $t_2$ for slot 2. Concurrently, a well-behaved leader for slot 2 proposes a block with value $val_2$. Upon receiving a quorum of \texttt{vote} messages $\langle \texttt{vote}, slot-1, view-0, value-val_1\rangle$, a well-behaved node votes for $val_2$. 
However, if a well-behaved node fails to receive a quorum of \texttt{vote} messages for $val_2$ by the time $t_1$ expires, it results in aborting the block for slot 1, consequently, the block for slot 2. In response, the node broadcasts a \texttt{view-change} message $\langle \texttt{view-change}, slot-1, view-1\rangle$. When a node receives $n-f$ \texttt{view-change} messages for view 1, it changes to view 1 and resets the view timer $t_1$. It then sends two \texttt{proof} messages $\langle \texttt{vote-1}, slot-1, view-0, value-val_1\rangle$, $\langle \texttt{vote-1}, slot-2, view-0, value-val_2\rangle$ and one \texttt{suggest} message $\langle \texttt{vote-2}, slot-1, view-0, value-val_1\rangle$ (notably, although \texttt{vote} messages do not explicitly indicate the phase, this information is preserved in the local memory). 
Upon receiving enough \texttt{suggest} messages, a well-behaved leader for view 1 and slot 1 proposes a new block with a safe value $val_1'$, adhering to \Cref{rule:picking-safe-proposal}. On receiving the \texttt{proposal} message with $val_1'$, a well-behaved node resets the view timer $t_2$ and advances to view 1 of slot 2. On determining $val_1'$ safe according to \Cref{rule:checking-safe-proposal}, a well-behaved node broadcasts a \texttt{vote} message $\langle \texttt{vote}, slot-1, view-1, value-val_1'\rangle$. Simultaneously, a well-behaved leader for view 1 and slot 2 proposes a block with a safe value $val_2'$, following \Cref{rule:picking-safe-proposal}. After determining $val_2'$ safe and receiving a quorum of \texttt{vote} messages for $val_1'$, a well-behaved node broadcasts a \texttt{vote} message $\langle \texttt{vote}, slot-2, view-1, value-val_2'\rangle$. The process for slot 3 mirrors that for slot 2, and is thus not detailed further. As no block for slot 4 was proposed previously, it defaults to starting from view 0. Slot 4 and subsequent slots proceed as in the good case described earlier. If a quorum of \texttt{vote} messages for slot 4 is received before the expiration of timer $t_1$, the block for slot 4 is notarized and the block for slot 1 is finalized.

\begin{algorithm}
\caption{$view\_change(v,s)$}\label{alg:view change}
\begin{algorithmic}[1]
\NoNumber{Code for node $i$:}
\State $proposed\_v\gets v+1$
\While{$true$}
    \Upon{receiving $f+1$ $\langle \texttt{view-change}, slot-s', view-v'\rangle$ messages}
    \If{$v'>proposed\_v$}
        \State $proposed\_v\gets v'$
        \State broadcast a $\langle \texttt{view-change}, slot-s', view-proposed\_v\rangle$ message
    \EndIf
    \EndUpon
    \Upon{receiving $n-f$ $\langle \texttt{view-change}, slot-s'', view-v''\rangle$ messages}
        \State $v\gets v''$
        \State $s\gets s''$
        \State abort blocks for $s, s+1,\ldots$ (view timers $t_s, t_{s+1},\ldots$ are invalid)
        \State broadcast a \texttt{suggest} and a \texttt{proof} message for block $s, s+1,\ldots$
    \EndUpon
\EndWhile
\end{algorithmic}
\end{algorithm}

\begin{algorithm}
\caption{Multi-shot TetraBFT}\label{alg:chained protocol}
\begin{algorithmic}[1]
\NoNumber{Code for node $i$:}
\State $v\gets 0, s\gets 1$
\While {$true$}
    \State continually \textbf{run} $view\_change(v,s)$ in the background
    \State a view timer $t_s$ starts \textcolor{gray}{// a new slot $s$ begins}
    \State continually \textbf{run} lines 6-8 in the background
    \For{$t_{s_k}\in (t_{s-3},t_{s-2},t_{s-1})$}
        \AtTime{$t_{s_k}+9\Delta$}
            \State broadcast a $\langle \texttt{view-change}, slot-s_k, view-v_{s_k}+1\rangle$ message
        \EndAtTime
    \EndFor
    \If{block $s$ is not proposed before}
        \State $v\gets 0$
    \EndIf
    \If{$s=1$ \textbf{and} node $i$ = leader}
        \State propose block $s$ with a safe value $val$ according to \Cref{rule:picking-safe-proposal}
    \EndIf
    \If{node $i \neq$ leader \textbf{and} determines block $s$'s value $val$ safe according to \Cref{rule:checking-safe-proposal}}
        \State broadcast a $\langle \texttt{vote}, slot-s, view-v, value-val\rangle$ message 
    \EndIf
    \If{node $i$ = leader}
        \State propose block $s+1$ with a safe value $val'$ according to \Cref{rule:picking-safe-proposal}
    \EndIf
    \Upon{receiving $n-f$ $\langle \texttt{vote}, slot-s, view-v, value-val\rangle$ messages}
        \State notarize block $s$ with value $val$
        \If{blocks $s-3,s-2,s-1,s$ are notarized}
            \State finalize block $s-3$
        \EndIf
        \State $s\gets s+1$
    \EndUpon
\EndWhile
\end{algorithmic}
\end{algorithm}

\subsection{Security Analysis for Multi-Shot TetraBFT}
\label{Security Analysis for Multi-shot TetraBFT}

\begin{theorem}
Multi-shot TetraBFT guarantees consistency and liveness.
\end{theorem}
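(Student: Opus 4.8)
The plan is to analyze Multi-shot TetraBFT by viewing it as a pipelined superposition of the single-shot instances of \Cref{sec:proof}, one per slot. The key structural fact I would establish first is that a single \texttt{vote} message for slot $s+k$ in a view simultaneously realizes \texttt{vote-1} for slot $s+k$, \texttt{vote-2} for slot $s+k-1$, \texttt{vote-3} for slot $s+k-2$, and \texttt{vote-4} for slot $s+k-3$. Consequently, a chain of four notarized blocks $b_s, b_{s+1}, b_{s+2}, b_{s+3}$ at consecutive slots supplies, for the slot-$s$ instance, a quorum of each of the four vote phases for the value $b_s$, so that directly finalizing $b_s$ mirrors a single-shot decision of $b_s$ at slot $s$. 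Here the ``value'' at slot $s$ is the block $b_s$ together with its hash pointer, which fixes its entire prefix; the chaining discipline --- a well-behaved node votes for $b_i$ only after $b_{i-1}$ is notarized and $b_i$ extends it --- plays the role of the single-shot proposal, and Rules~\ref{rule:picking-safe-proposal}--\ref{rule:checking-safe-proposal} with the \texttt{suggest}/\texttt{proof} bookkeeping are reused per slot.

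For consistency I would aim to prove per-slot agreement --- no two well-behaved nodes finalize conflicting blocks at the same slot --- by transporting \Cref{thm:main} (agreement) through the embedding above, using \Cref{safety corollary} and \Cref{lemma: safety inner view} for conflicts arising inside one view and \Cref{lemma: safety across view} to rule out a conflicting slot-$s$ \texttt{vote-1} in any later view once a block is decided at slot $s$. I would then lift per-slot agreement to chain consistency: because $b_s$ fixes its prefix through its hash pointer and a well-behaved node votes for a block only if its parent is already notarized, every finalized chain is a genuine hash chain, so two finalized chains must coincide on every common slot; hence the shorter is a prefix of the longer, which is exactly the consistency statement.

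For liveness I would argue that after GST the \texttt{view-change} echoing rule ($f{+}1$ messages trigger a resend, $n-f$ trigger a transition) synchronizes all well-behaved nodes into a common view within $2\Delta$, exactly as in the single-shot timeout analysis. Since leaders rotate over views and $3f<n$, each slot eventually obtains a well-behaved leader in some view, so its block is proposed, determined safe, and notarized; this is \Cref{claim:safety existence,claim:proposal_determined_safe,claim:if_safe_then_termination} applied per slot. Pipelining then extends any notarized prefix into a chain of four consecutive notarized blocks within one timeout window, producing a finalization. A pending transaction $\mathsf{txn}$ is guaranteed inclusion because a well-behaved leader places it into the block it proposes, and by the above that block is eventually finalized in every well-behaved node's chain.

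The main obstacle is that pipelined finalization genuinely couples slots and can span different views, so the per-slot embedding is not faithful to a single single-shot view and the single-shot lemmas cannot be invoked as a black box. The example of \Cref{fig:block_fail} is telling: slots $1,2,3$ are notarized in view $1$ while slot $4$ restarts and is notarized in view $0$, yet the four-block chain finalizes slot $1$, so the \texttt{vote-4} contribution for slot $1$ comes from view-$0$ votes even though its earlier phases are in view $1$. I would handle this by re-deriving the cross-view locking directly at the chained level --- adapting the quorum-intersection and four-phase reasoning of \Cref{sec:safety} to show that once a four-block chain is notarized, no well-behaved node can ever contribute a vote completing a conflicting notarized chain at those slots --- rather than relying on the clean single-view decision picture. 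Additional care is needed to bound the number of simultaneously-aborted blocks (at most the finality latency of $5$) so that the view-change machinery terminates and the pipeline recovers.
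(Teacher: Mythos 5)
Your proposal takes essentially the same route as the paper: interpret each pipelined \texttt{vote} as simultaneously serving the four single-shot voting phases for the preceding slots, derive consistency by reducing two non-prefix-comparable finalized chains to two distinct blocks finalized at the same slot (contradicting single-shot agreement), and derive liveness from post-GST view-change synchronization plus the single-shot safety/termination lemmas applied per slot, with four consecutive notarizations yielding finalization. The cross-view coupling obstacle you flag (votes completing one slot's instance while carrying another slot's view number) is a real subtlety, but the paper's own proof does not address it either --- it simply asserts that the voting and view-change structure ``is consistent with basic TetraBFT'' and invokes the single-shot properties as a black box --- so on that point your treatment is more cautious than the published argument rather than divergent from it.
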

\begin{proof}
In the Multi-shot TetraBFT protocol, every vote serves multiple purposes: it acts as a proposal for the current block $s$, and simultaneously as \texttt{vote-1} for block $s-1$, \texttt{vote-2} for block $s-2$, \texttt{vote-3} for block $s-3$, and \texttt{vote-4} for block $s-4$. 
Each vote is sent only after receiving a quorum from the preceding vote. 
In a view-change case, if a node receives $n-f$ \texttt{view-change} messages for a view and slot, it changes to the new view for that block. This scenario necessitates the sending of \texttt{suggest}/\texttt{proof} messages as well. In the Multi-shot TetraBFT protocol, the structure of voting and the process for view change are consistent with basic TetraBFT, allowing us to apply the properties of basic TetraBFT to validate Multi-shot TetraBFT.

\textbf{Consistency.} Assuming two chains are finalized by two well-behaved nodes, and if one chain is neither a prefix of the other nor identical, this situation leads to the finalization of two distinct blocks within the same slot. However, according to the agreement property of the basic TetraBFT, no two well-behaved nodes can decide different values. This scenario presents a contradiction to the agreement property of basic TetraBFT.

\textbf{Liveness.} We consider after GST that all messages are guaranteed to be delivered. Consider a scenario where a well-behaved node receives a sequence of proposals (transactions) from well-behaved leaders. Consequently, all well-behaved nodes will receive these proposals from the leaders. Leveraging the validity and termination properties of the basic TetraBFT protocol, each well-behaved node will make a decision on each proposal. In the good case, every proposal is decided and notarized as a block. In the view-change case, after $5\Delta$ time ($2\Delta$ for view change and $3\Delta$ for \texttt{suggest/proof} message, \texttt{proposal} message, and a \texttt{vote} message), a new block gets notarized.  Given that every sequence of four consecutive blocks gets notarized, the first block is finalized. Therefore, as long as there are blocks that have been notarized, it follows that the transactions received can also be finalized.
\end{proof}

\section{Conclusion and Future Work}

In this paper, we have improved upon the state-of-the-art in optimally-resilient, optimistically responsive, partially-synchronous unauthenticated BFT consensus protocols, deploying new algorithmic principles to propose TetraBFT, which achieves better latency than IT-HS~\cite{abraham2020information} while guaranteeing constant storage and linear communicated bits per node per view.

Moreover, we have proposed a pipelined version of TetraBFT which, at least in theory, multiplies the throughput of TetraBFT by a factor of 5, and we have conducted the first detailed analysis of a pipelined protocol in the unauthenticated setting.

Taking stock, we asked in the introduction what is the minimum good-case latency achievable for optimistic responsiveness with constant space.
TetraBFT shows that the latency of 6 message delays achieved by IT-HS is not optimal, as it achieves 5.
However, whether 5 is the lower bound remains an open question.
Anecdotal evidence could point towards a lower bound of 5, as a recent investigation of the connected-consensus problem~\cite{attiya_multi-valued_2023} (a generalization of adopt-commit and graded agreement), which is solvable asynchronously, also arrived at an unauthenticated BFT protocol with 5 phases.

In future work, it would also be interesting to implement Multi-shot Tetra BFT and conduct a practical evaluation, in particular in the context of heterogeneous trust systems such as the XRP Ledger or the Stellar network, where TetraBFT has the potential to offer better guarantees than currently deployed protocols, and in the context of trustless cross-chain synchronization protocols executed by smart contracts.

\begin{acks}
This work is supported in part by a gift from Stellar Development Foundation and by the Guangzhou-HKUST(GZ) Joint Funding Program (No. 2024A03J0630). 
\end{acks}

\bibliographystyle{ACM-Reference-Format}
\bibliography{8_bib.bib}
\clearpage
\appendix
\section*{Appendix}
\section{Helper Algorithms}
\label{sec:helper algo}

\begin{algorithm}[H]
\caption{A leader determines value $val$ safe in view $v$}\label{alg:leader finds v}
\begin{algorithmic}[1]
    \NoNumber{Input: \texttt{suggest} messages($\{suggest[vote\mathit{2}[view, val], prev\_vote\mathit{2}[view, val]], vote\mathit{3}[view, val]\}$), $v$}
    \State $vote\mathit{3}_\_val\_set \gets \emptyset,vote\mathit{2}_\_val\_set \gets \emptyset$
    \State $val\gets val_{init},count\_no\_vote\mathit{3} \gets 0$
    \State $vote\mathit{2}\_view\_val\_dict\gets \{\}, prev\_vote\mathit{2}\_view\_val\_dict\gets \{\}$
    \Statex \textcolor{gray}{// $vote\mathit{2}\_view\_val\_dict: \{view: (cnt,vote\mathit{2}\_inner\_val\_set)\}$}
    \Statex \hspace{0.4cm} \textcolor{gray}{$prev\_vote\mathit{2}\_view\_val\_dict:\{view, cnt\}$}
    \ForAll{$suggest\in$ \texttt{suggest} messages}
        \If{$suggest.vote\mathit{3}= \emptyset$}
            \Comment{\Cref{rule:picking-safe-proposal} \Cref{case:no_vote_leader}}
            \State $count\_no\_vote\mathit{3} \gets count\_no\_vote\mathit{3} + 1$ 
            \If{$count\_no\_vote\mathit{3} \ge n-f$}
                \State \Return $val$
            \EndIf
        \ElsIf{$suggest.vote\mathit{3}.val\notin vote\mathit{3}\_val\_set$}
            \State add $suggest.vote\mathit{3}.val$ to $vote\mathit{3}\_val\_set$
        \EndIf
        \If{$suggest.vote\mathit{2}\neq \emptyset$ and $suggest.prev\_vote\mathit{2}= \emptyset$}
            \State $vote\mathit{2}\_view\_val\_dict[suggest.vote\mathit{2}.view]\gets\{cnt\gets cnt+1$ if $suggest.vote\mathit{2}.view$ exists 
            \Statex \hspace{1cm} in the dict else 1, add $suggest.vote\mathit{2}.val$ to $vote\mathit{2}\_inner\_val\_set\}$
        \EndIf
        \If{$suggest.prev\_vote\mathit{2}\neq \emptyset$}
            \State $prev\_vote\mathit{2}\_view\_val\_dict[suggest.prev\_vote\mathit{2}.view]\gets\{cnt\gets cnt+1$ 
            \Statex \hspace{1cm} if $suggest.prev\_vote\mathit{2}.view$ exists in the dict else 1$\}$
        \EndIf
    \EndFor
    \For{$v' \gets v-1, v-2, \ldots,0$} 
        \Comment{\Cref{rule:picking-safe-proposal} \Cref{case:highest_vote_leader}}
        \State $quorum\_num \gets 0,blocking\_num \gets 0$
        \State $vote\mathit{2}\_view\_ge\_v'\_cnt\gets \sum vote\mathit{2}\_view\_val\_dict[view][cnt]$ if $view\ge v'$ 
        \State $prev\_vote\mathit{2}\_view\_ge\_v'\_cnt\gets \sum prev\_vote\mathit{2}\_view\_val\_dict[view][cnt]$ if $view\ge v'$ 
        \If{$vote\mathit{2}\_view\_ge\_v'\_cnt + prev\_vote\mathit{2}\_view\_ge\_v'\_cnt< f+1$} \label{algo:2_19}
            \State \textbf{continue}
        \Else
            \State $vote\mathit{2}_\_val\_set \gets vote\mathit{2}_\_val\_set\cup vote\mathit{2}\_view\_val\_dict[v'][vote\mathit{2}\_inner\_val\_set]$
            \ForAll{$val \in  vote\mathit{2}_\_val\_set\cup vote\mathit{3}_\_val\_set$}
                \ForAll{$suggest\in$ \texttt{suggest} messages}
                    \If{$suggest.vote\mathit{3}.view < v'$ \textbf{or} ($suggest.vote\mathit{3}.view=v'$ \textbf{and} 
                    \Statex \hspace{3cm} $suggest.vote\mathit{3}.val=val$)}
                        \State $quorum\_num \gets quorum\_num + 1$
                        \If{$node\_claim\_safe(suggest,v',val)=true$}
                            \State $blocking\_num \gets blocking\_num + 1$
                        \EndIf
                    \EndIf
                    \If{$quorum\_num \ge n-f$ \textbf{and} $blocking\_num \ge f+1$}
                        \State \Return $val$
                    \EndIf
                \EndFor
            \EndFor
        \EndIf
    \EndFor
\State \Return $val$
\end{algorithmic}
\end{algorithm}

\begin{algorithm}
\caption{A node determines value $val$ safe in view $v$}\label{alg:node determine value}
\begin{algorithmic}[1]
\NoNumber{Input: \texttt{proof} messages($\{proof[vote\mathit{1}[view, val], prev\_vote\mathit{1}[view, val]], vote\mathit{4}[view, val]\}$), $v$, $val$}
\State $count\_no\_vote\mathit{4} \gets 0$
\State $val\_view\_dict=\{\}$ \textcolor{gray}{// $val\_view\_dict=\{(view,val):proof\_message\_inner\_set$\}}
\State $vote\mathit{1}\_val\_set\gets \emptyset$
\ForAll{$proof\in$ \texttt{proof} messages} 
    \If{$proof.vote\mathit{4}= \emptyset$} \Comment{\Cref{rule:checking-safe-proposal} \Cref{case:no_votes}}
        \State $count\_no\_vote\mathit{4} \gets count\_no\_vote\mathit{4} + 1$
        \If{$count\_no\_vote\mathit{4} \ge n-f$} 
            \State \Return $true$
        \EndIf
    \EndIf
    \If{$proof.vote\mathit{1}\neq \emptyset$ and $proof.prev\_vote\mathit{1}= \emptyset$}
        \State add $proof.vote\mathit{1}.val$ to $vote\mathit{1}\_val\_set$
    \EndIf
\EndFor
\For{$v'\gets v-1, \ldots,0$} \Comment{\Cref{rule:checking-safe-proposal} \Cref{case:one block set}} \label{algo:3_11}
    \State $quorum\_num \gets 0,blocking\_num \gets 0$
    \ForAll{$proof\in$ \texttt{proof} messages}
        \If{$proof.vote\mathit{4}.view < v'$ \textbf{or} ($proof.vote\mathit{4}.view=v'$ \textbf{and} $proof.vote\mathit{4}.val=val$)}
            \State $quorum\_num \gets quorum\_num + 1$
            \If{$node\_claim\_safe(proof,v',val)=true$}
                \State $blocking\_num \gets blocking\_num + 1$
            \EndIf
        \EndIf
        \If{$quorum\_num \ge n-f$ \textbf{and} $blocking\_num \ge f+1$}
            \State \Return $true$
        \EndIf
    \EndFor
\EndFor \label{algo:3_19}
\For{$view\gets v-1, \ldots,0$} \Comment{\Cref{rule:checking-safe-proposal} \Cref{case:two block sets}}
    \ForAll{$val \in vote\mathit{1}\_val\_set$}
        \State $node\_claim\_safe\_cnt\gets 0, proof\_inner\_set \gets \emptyset$
        \ForAll{$proof\in$ \texttt{proof} messages}
            \If{$node\_claim\_safe(proof,view,val) = true$}
                \State $node\_claim\_safe\_cnt\gets node\_claim\_safe\_cnt + 1$
                \State add $val$ to $proof\_inner\_set$
            \EndIf
        \EndFor
        \If{$node\_claim\_safe\_cnt\ge f+1$}
            \State $val\_view\_dict[(view,val)]\gets \{proof\_inner\_set$\}
            \ForAll{$\{(view', val'): proof\_inner\_set'\}$ in $val\_view\_dict$
                \Statex \hspace{2.3cm} if ($view'>view$ \textbf{and} $val'\neq val$)}
                \State $quorum\_num \gets 0,proof\_quorum\_set\gets \emptyset$
                \ForAll{$proof\in$ \texttt{proof} messages}
                    \Statex \hspace{2.3cm} \textcolor{gray}{// it is sufficient to check if \Cref{rule:checking-safe-proposal} \Cref{2bi,2bii} hold only when $v'=view$}
                    \If{$proof.vote\mathit{4}.view < view$ 
                    \Statex \hspace{3cm} \textbf{or} ($proof.vote\mathit{4}.view=view$ \textbf{and} $proof.vote\mathit{4}.val=val$)}
                        \State $quorum\_num\gets quorum\_num+1$
                        \State add $proof$ to $proof\_quorum\_set$
                    \EndIf
                \If{$quorum\_num \ge n-f$ \textbf{and}
                    \Statex \hspace{3cm} $|proof\_inner\_set\cap proof\_quorum\_set|\ge f+1$ \textbf{and}
                    \Statex \hspace{3cm} $|proof\_inner\_set'\cap proof\_quorum\_set|\ge f+1$}
                    \State \Return $true$
                \EndIf
                \EndFor
            \EndFor
        \EndIf
    \EndFor
\EndFor
\State \Return $false$
\end{algorithmic}
\end{algorithm}

\clearpage
\onecolumn
\section{Formal Specification of TetraBFT in TLA+}
\label{sec:formal-spec}
\includegraphics[page=1, width=\textwidth]{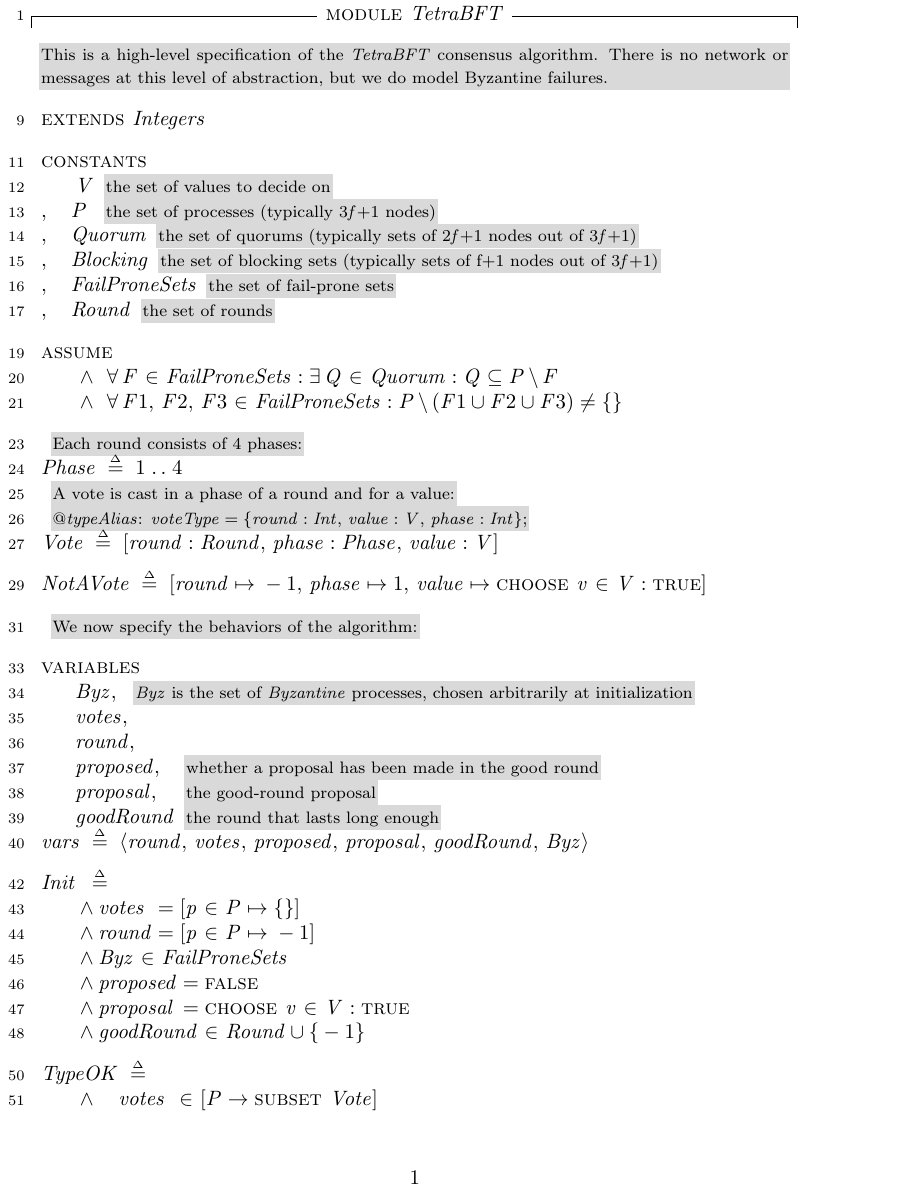}
\newpage
\includegraphics[page=2, width=\textwidth]{TetraBFT.pdf}
\newpage
\includegraphics[page=3, width=\textwidth]{TetraBFT.pdf}
\newpage
\includegraphics[page=4, width=\textwidth]{TetraBFT.pdf}
\newpage
\includegraphics[page=5, width=\textwidth]{TetraBFT.pdf}
\newpage
\includegraphics[page=6, width=\textwidth]{TetraBFT.pdf}
\newpage
\includegraphics[page=7, width=\textwidth]{TetraBFT.pdf}

\end{document}